\theoremstyle{plain}
\newtheorem{theorem}{Theorem}
\newtheorem{lemma}{Lemma}
\newtheorem{corollary}{Corollary}
\newtheorem{proposition}{Proposition}
\theoremstyle{definition}
\newtheorem{definition}{Definition}
\newtheorem{example}{Example}
\newtheorem{remark}{Remark}
\newcommand{\C}{{\mathcal C}}
\newcommand{\K}{{\mathcal K}}
\newcommand{\N}{{\mathcal N}}
\newcommand{\bM}{{\boldsymbol{M}}}
\newcommand{\bP}{{\boldsymbol P}} 
\newcommand{\bQ}{{\boldsymbol Q}}
\newcommand{\bA}{{\boldsymbol A}}
\newcommand{\bB}{{\boldsymbol B}}
\newcommand{\bU}{\boldsymbol{U}}
\newcommand{\bV}{\boldsymbol{V}}
\newcommand{\bu}{{\boldsymbol u}}
\newcommand{\bv}{{\boldsymbol v}}
\newcommand{\bc}{{\boldsymbol c}}
\newcommand{\bzero}{{\boldsymbol 0}}
\newcommand{\bx}{{\boldsymbol{x}}}
\newcommand{\bG}{{\boldsymbol{G}}}
\newcommand{\bbF}{{\mathbb F}}
\newcommand{\ff}{\mathbb{F}}
\newcommand{\fq}{\mathbb{F}_q}
\newcommand{\fkE}{{\mathfrak E}}
\newcommand{\fkF}{{\mathfrak F}}
\newcommand{\supp}{{\sf supp}}
\newcommand{\dist}{{\mathsf{d}}}
\newcommand{\cA}{{\mathscr{A}}}
\newcommand{\cB}{{\mathscr{B}}}
\newcommand{\cC}{{\mathscr{C}}}
\newcommand{\cG}{{\mathscr{G}}}
\newcommand{\cV}{{\mathscr{V}}}
\newcommand{\cE}{{\mathscr{E}}}
\newcommand{\define}{\stackrel{\mbox{\tiny $\triangle$}}{=}}
\newcommand{\et}{{\emph{et al.}}}
\newcommand{\ledc}{{{$\{N_i,K_i\}_{i=1}^m$-LEDC}}}
\newcommand{\dmax}{d_{\max}}
\title{Locally Encodable and Decodable Codes for Distributed Storage Systems
\thanks{This work was completed when Han Mao Kiah visited Singapore University of Technology and Design.}}
 \author{
   \IEEEauthorblockN{
     Son Hoang Dau\IEEEauthorrefmark{1},
		 Han Mao Kiah\IEEEauthorrefmark{2},
     Wentu Song\IEEEauthorrefmark{3}, 
     Chau Yuen\IEEEauthorrefmark{4}
		} 
   \IEEEauthorblockA{
   \IEEEauthorrefmark{1}\IEEEauthorrefmark{3}\IEEEauthorrefmark{4}Singapore University of Technology and Design,   
   \IEEEauthorrefmark{2}Nanyang Technological University, Singapore\\     	
		Emails: $\{${\it\IEEEauthorrefmark{1}sonhoang\_dau, 
		\IEEEauthorrefmark{2}wentu\_song,
		\IEEEauthorrefmark{3}yuenchau}$\}$@sutd.edu.sg,		
		{\it\IEEEauthorrefmark{2}kiahhanmao@gmail.com}
		}
 }
\begin{document}

\maketitle

\begin{abstract}
We consider the locality of encoding and decoding operations 
in distributed storage systems (DSS), and 
propose a new class of codes, called locally encodable and decodable codes (LEDC),
that provides a higher degree of operational locality compared to currently known codes.
For a given locality structure, we derive an upper bound on the global distance and 
demonstrate the existence of an optimal LEDC for sufficiently large field size.
In addition, we also construct two families of optimal LEDC for fields with size linear in code length.
\end{abstract}

\section{Introduction}
\label{sec:intro}

Motivated by practical applications in data storage and communication, 
\emph{locality} has become an increasingly important notion in the study of erasure codes.
A code with some locality property allows certain
operations to be performed by accessing only a portion of a codeword
or a data vector. 
As a result, significant amount of resources (bandwidth, computational power, memory) can be saved. We list below some representative families of
codes with certain locality properties.  
\begin{itemize}
	\item \emph{Locally decodable codes} were proposed by Katz and Trevisan~\cite{KatzTrevisan2000}, which allow a single symbol of the original data to be decoded with high probability by only querying a small number of coded symbols of a possibly corrupted codeword.
\item
\emph{Locally testable code} was first systematically studied by 
Goldreich and Sudan~\cite{GoldreichSudan2002}.
For such a code, there exists a test that checks whether a given string
is a codeword, or rather far from the code, by reading only a constant
number of symbols of the string. 
\item \emph{Locally repairable codes} (LRC) (see e.g.~\cite{OggierDatta2011, Gopalan2012}) were tailored-made for distributed storage systems 
(DSS), which guarantee that any erased coded symbol (corresponding to a failed storage node) can be reconstructed from a small subset of other coded symbols (corresponding to other surviving nodes). 
LRC with a given locality network topology were studied in \cite{Mazumdar2014,Shanmugam2014}. 
\item 
\emph{Update efficient codes}, also known as \emph{locally updatable codes} (see e.g.~\cite{Anthapadmanabhan2010, Chandran2014, Mazumdar2015}), require updating as few nodes
as possible whenever one data symbol is changed.
\item
\emph{Coding with constraints}~\cite{HalbawiHoYaoDuursma2013, HalbawiThillHassibi15, YanSprintsonZelenko2014, 
DauSongYuenISIT14, DauSongYuen_JSAC2014, SongDauYuen15} requires that each coded symbol must be a function of a given subset of data symbols. In a classical code, each coded symbol
can be a function of all data symbols. 
\end{itemize}

\begin{figure}[ht]
\centering
\includegraphics[scale=0.8]{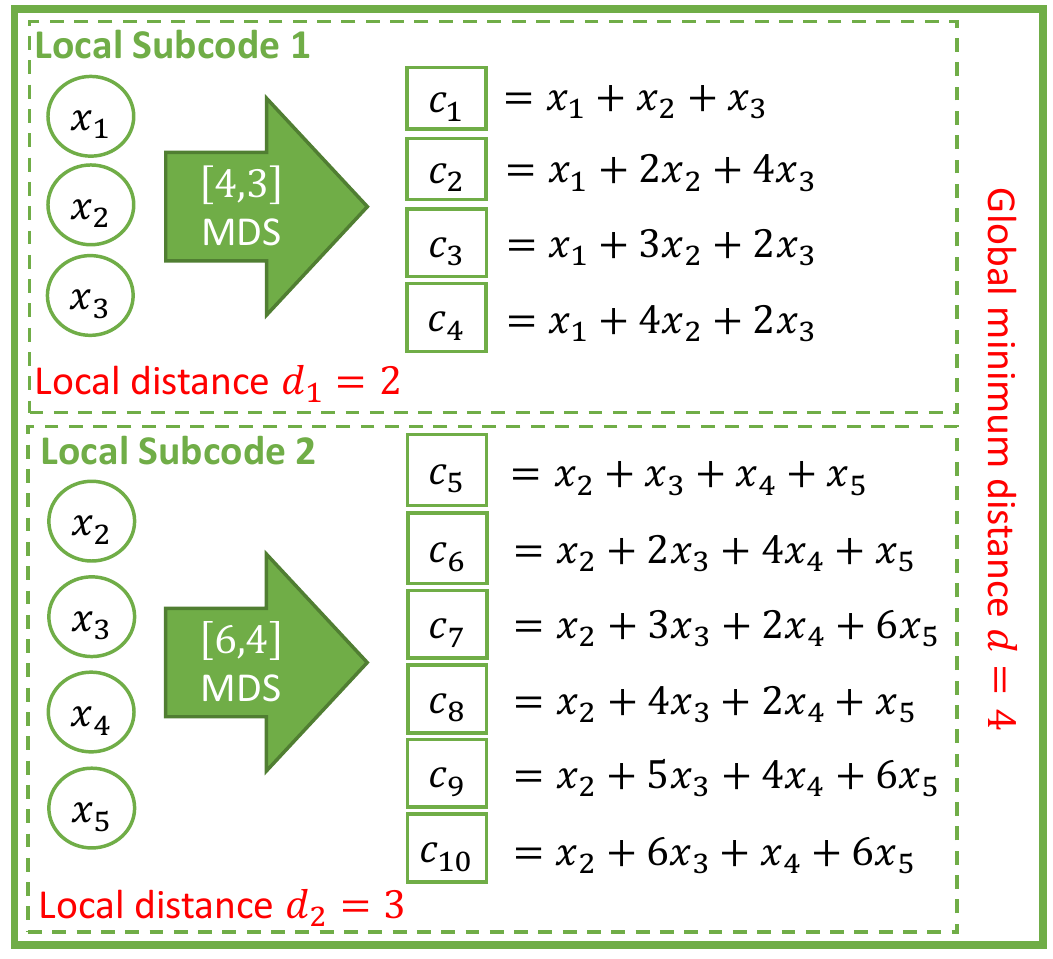}
\caption{An example of an LEDC with two local subcodes. 
Here $\K_1=\{x_1, x_2, x_3\}$, $\K_2 = \{x_2, \ldots, x_5\}$, 
$\N_1=\{c_1,\ldots,c_4\}$, and $\N_2 = \{c_5, \ldots, c_{10}\}$. 
The underlying field is $\ff_7$ - the set of integers modulo $7$. 
Coded symbols in each group
$\N_i$ are encoded from the corresponding data symbols in $\K_i$
and conversely can be used to decode these data symbols.
Each local subcode is an MDS code and moreover, the LEDC, which encodes $k = |\K_1 \cup \K_2|=5$ 
data symbols into $n = |\N_1|+|\N_2| = 10$ coded symbols, can reach the optimal minimum distance $d=4$.
In this example, the subcodes are encoded by simply using two Vandermonde
matrices. In general, this naive construction
yields LEDCs with optimal distances only in certain cases (see 
Theorem~\ref{thm:constructionI} and Example~\ref{ex:counter_ex}).}
\label{fig:toy-example}
\vspace{-15pt}
\end{figure}

Continuing along this line of research, we propose the class of 
\emph{locally encodable and decodable code} (LEDC) that provides
a higher level of operational locality in distributed storage systems (DSS) compared to currently known codes. 
In an LEDC, the set of $n$ coded symbols (corresponding to $n$ storage
nodes) is partitioned into $m$ disjoint subsets $\N_i$'s, each of which is responsible for encoding and decoding of a given subset $\K_i$ of some data symbols.
Each pair $(\N_i,\K_i)$ forms a \emph{maximum distance separable} (MDS) code, referred to as a \emph{local} subcode. 
The parameter of interest is the minimum (Hamming) distance of the
\emph{global} LEDC. An LEDC that has the largest minimum distance
is called \emph{optimal}.
 
We illustrate in Fig.~\ref{fig:toy-example} a toy example
of an optimal LEDC with minimum distance $d = 4$. This LEDC consists of
$m = 2$ local subcodes. The first local subcode, which is a $[4,3]$ MDS
code, encodes three data symbols $\K_1=\{x_1, x_2, x_3\}$ 
into four coded symbols $\N_1=\{c_1,\ldots,c_4\}$.
The second local subcode, which is a $[6,4]$ MDS code, encodes 
four data symbols $\K_2 = \{x_2, \ldots, x_5\}$ into six coded symbols
$\N_2 = \{c_5, \ldots, c_{10}\}$. 
The underlying field is $\ff_7$ - the set of integers modulo $7$.  

An LEDC distinguishes itself from the family of locally repairable codes in the following aspects.
\begin{itemize}
	\item First, \emph{both} encoding and decoding can be done locally 
	for each local subcode of an LEDC - to create coded symbols in $\N_i$, only
	data symbols from $\K_i$ are involved, and conversely, to decode
	data symbols in $\K_i$, only coded symbols in $\N_i$ are involved. 
	In contrast, the encoding operation in each local group in an LRC
	may involve all data symbols and only local repair is required, not
	local decodability.
	In other words, an LRC does not provide local
	encoding and decoding and only guarantees local repair, which can also be done by an LEDC.
	\item Secondly, in the context of LEDC, the structure of the local groups (i.e. $\K_i$'s and $\N_i$'s) are given, 
	whereas in the context of LRC, the size and the repair capability of each group are given as input.
\end{itemize}

LEDC falls under the regime of coding with constraints as each coded symbol must be a function of a given set of data symbols. 
However, while LEDC provides local decodability, coding with 
constraints does not.
Notice also that the notion of \emph{local decodability} in the setting of LEDC is different
from that in the setting of locally decodable codes (LDC)~\cite{KatzTrevisan2000}. Indeed, while an LEDC guarantees that
each given \emph{subset} of data symbols can be decoded (with probability
one) from a specific subset of coded symbol (possibly under some errors/erasures), an LDC requires that \emph{each} data symbol can be decoded (with
high probability) from a small subset of coded symbols (possibly under
some errors). 

A crucial feature of an optimal LEDC is that the repair capability of the global system can be greater than its individual local systems.
To illustrate this, we consider the example in Fig.~\ref{fig:toy-example} and regard the local subcodes as codes operating over independent storage systems.
The first storage system utilizes an MDS code of
minimum distance $d_1 = 2$ and hence tolerates \emph{one} node failure. 
On the other hand, the second storage system utilizes an MDS code of
minimum distance $d_2 = 3$ and hence tolerates \emph{two} node failures. 
If the two codes are co-designed to form an optimal LEDC with optimal distance $d=4$, 
then the LEDC provides extra protection for node failures: 
any \emph{three} node failures across the two systems can be tolerated.

The improvement in fault tolerance results from the fact that the two systems
share some common data symbols ($x_2$ and $x_3$ in this toy example). 
Each system also has some private data symbols ($x_1$ in the first
system and $x_4$ and $x_5$ in the second system). 
In normal condition where the node failures are within the 
local fault tolerance, each storage system can work independently
to repair the failed nodes. No sharing of private data is required. 
However, in a catastrophic scenario where the number
of node failures exceeds the local fault tolerance, the two systems
can cooperate to repair the failed nodes by sharing the private data.

Furthermore, there exist LEDCs whose fault tolerances
exceed the \emph{sum} of the fault tolerances of their local subcodes.

\begin{example}\label{ex:equal}
For example, we can construct an LEDC with the following locality structure:
 \begin{align*}
\N_1 &=\{c_1,c_2,\ldots,c_{5}\}, &
\K_1 &=\{x_1,x_2,x_3,x_{4}\}, \\
\N_2 &=\{c_{6},c_{7},\ldots,c_{12}\}, &
\K_2 &=\{x_2,x_3,\ldots,x_{7}\}. 
\end{align*}
Here, the local subcodes are $[5,4]$ and $[7,6]$ MDS codes that each tolerate up to one erasure.
We demonstrate later in Example~\ref{ex:2} that the LEDC can achieve a minimum distance of five and 
hence is able to tolerate up to four erasures.
\end{example}
 
Our key results are summarized below. 
\begin{itemize}
	\item We prove that for any given locality structure (i.e. $\K_i$'s and $\N_i$'s), there always exists an optimal LEDC over any sufficiently large finite fields. The optimal minimum distance, however, can be determined in polynomial time. 
	\item When $m = 2$ (i.e. there are only two local subcodes), we provide constructions for two families of optimal LEDC.
	\begin{enumerate}[(I)]
	\item A straightforward construction using nested MDS codes for the case where $|\K_1\cap \K_2|$ is small.
	\item An algebraic construction of LEDC as a (punctured) subcode of a cyclic code of length $q-1$, where $q$ is the size of the finite field, 
	in the case where $|\N_1-\K_1| = |\N_2 - \K_2|$.
	\end{enumerate}
\end{itemize}

Our paper is organized as follows. 
Necessary definitions and notation are provided in Section~\ref{sec:pre}. 
We prove the existence of optimal LEDC over sufficiently large finite 
fields in Section~\ref{sec:large_field}. 
Section~\ref{sec:cyclic} is devoted for the construction of optimal LEDC 
over small fields when there are two local subcodes. 

\section{Preliminaries}
\label{sec:pre}   

Let $\fq$ denote the finite field with $q$ elements. 
Let $[n]$ denote the set $\{1,2,\ldots,n\}$. 
For a $k \times n$ matrix $\bM$, for $i \in [k]$ and $j \in [n]$, let
$\bM_i$ and $\bM[j]$ denote the row $i$ and the column $j$ of $\bM$, respectively. 
We define below standard notions from coding theory (for instance, see \cite{MW_S}).
 
The \emph{support} of a vector $\bu = (u_1,\ldots,u_n) \in \fq^n$ is the set 
$\supp(\bu) = \{i \in [n] \colon u_i \neq 0\}$. 
The (Hamming) \emph{weight} of a vector $\bu \in \fq^n$ is $|\supp(\bu)|$.
The (Hamming) \emph{distance} between two vectors $\bu$ and $\bv$ of $\fq^n$
is defined to be
$
\dist(\bu,\bv) = |\{i \in [n] \colon u_i \ne v_i\}|. 
$
A $k$-dimensional subspace $\cC$ of $\fq^n$ is called a linear $[n,k,d]_q$ \emph{(erasure) code} over $\fq$ 
if the minimum distance $\dist(\cC)$ between any pair of distinct vectors in $\cC$ is equal to $d$.  
Sometimes we may use the notation $[n,k,d]$ or just $[n,k]$ for the sake of simplicity. The vectors in $\cC$ are called \emph{codewords}. 
It is known that the minimum weight of a nonzero codeword in a linear code $\cC$ is equal to its minimum distance $\dist(\cC)$. 
The well-known Singleton bound (\cite[Ch. 1]{MW_S}) states that for any $[n,k,d]_q$
code, it holds that $d \leq n - k + 1$. 
If the equality is attained, the code is called \emph{maximum distance separable} (MDS).  

A \emph{generator matrix} $\bG$ of an $[n,k]_q$ code $\cC$ is a $k \times n$ matrix whose rows are linearly independent codewords of $\cC$. Then $\cC = \{\bx \bG \colon \bx \in \fq^k\}$. 
It is also well known that if $\dist(\cC) = d$ then $\cC$ can correct any $d-1$ erasures.
In other words, $\bx$ can be recovered from any $n-d+1$ coordinates of the codeword $\bc = \bx\bG$.  
An $[n,k,d]$ code can be used in a DSS as follows.
A vector $\bx$ of $k$ data symbols can be encoded into $n$ coded symbols $\bc = \bx \bG$, 
each of which is stored at one node in the system. 
Then $\bx$ can be recovered from any set of $n-d+1$ nodes. 
Hence the DSS can tolerate $d-1$ node failures.  

\begin{definition}
\label{def:LEDC}
Let $n \geq k \geq 1$ and $m \geq 1$ be some integers.
Let $K_1, \ldots, K_m$ be $m$ nonempty (possibly overlapping)
subsets of $[k]$ such that $[k] = \cup_{i=1}^m K_i$. 
Let $N_1, \ldots, N_m$ be $m$ nonempty non-overlapping subsets of $[n]$ that partition $[n]$, i.e. $N_i \cap N_{i'} = \varnothing$ if $i \neq i'$ and $[n] = \cup_{i=1}^m N_i$. 
Suppose that $n_i = |N_i| \geq |K_i| = k_i$ for every $i \in [m]$. 
An $\{N_i,K_i\}_{i=1}^m$-LEDC over an alphabet $\Sigma$ is a
mapping 
$\fkE: \Sigma^k \longrightarrow \Sigma^n$
that maps a vector of $k$ data symbols $\bx = (x_1,\ldots,x_k) \in \Sigma^k$ into a vector of $n = \sum_{i=1}^m n_i$ coded symbols 
$\bc = (c_1, \ldots, c_n) \in \Sigma^n$
and satisfies the following properties.
\begin{itemize}
	\item[(P1)] 
	The coded symbols in $\N_i \triangleq \{c_j: j \in N_i\}$
	only depends on the data symbols in $\K_i \triangleq \{x_j: j \in K_i\}$
	$(\forall i \in [m])$. 
	\item[(P2)] 
	The set of data symbols $\K_i$ can be
	determined from any subset of $k_i$ coded symbols of the set $\N_i$
	$(\forall i \in [m])$.  
\end{itemize}
\end{definition}

When $\Sigma$ is a finite field $\fq$ for some prime power $q$ and
the mapping $\fkE$ is linear, the corresponding LEDC is called \emph{linear}.
For linear LEDC, the mapping $\fkE$ can be represented by
a $k \times n$ matrix $\bG$ over $\fq$ such that 
$\fkE(\bx) = \bx \bG$. Such a matrix $\bG$ is referred to
as a \emph{generator matrix} of the LEDC.   
In this work we are only interested in linear LEDC. 
The second property (P2) in Definition~\ref{def:LEDC}, which $\fkE$ must satisfy, states that each pair of $n_i$ coded symbols in $\N_i$
and $k_i$ data symbols in $\K_i$ forms a linear $[n_i,k_i]$
MDS code. We refer to these $m$ MDS codes as the \emph{local subcodes} of
the LEDC. 
An example of a linear LEDC with two local subcodes over $\ff_7$ is given in Fig.~\ref{fig:toy-example}. 

\section{Existence of Optimal Locally Encodable and Decodable Codes
Over Large Fields}
\label{sec:large_field}

We first discuss the closely related concept of coding with constraints. 
The upper bound on the minimum distance for a code with coding constraints~\cite{HalbawiThillHassibi15, SongDauYuen15} still applies in the setting of LEDC. 
The existence proof for optimal codes over large finite fields~\cite[Lemma 12]{SongDauYuen15}, however, needs to be appropriately modified to take into account the new locality feature of LEDC. 

\subsection{Coding With Constraints}
\label{subsec:coding_constraints}

In the setting of linear coding with constraints~\cite{HalbawiHoYaoDuursma2013, HalbawiThillHassibi15, YanSprintsonZelenko2014, 
DauSongYuenISIT14, DauSongYuen_JSAC2014, 
SongDauYuen15}, the data vector $\bx \in \fq^k$ is encoded into
the coded vector $\bc = \bx \bG \in \fq^n$ for some $k\times n$
matrix $\bG$ in $\fq$, subjected to the following constraints: 
each coded symbol $c_j$ is a function of a given subset of 
the data symbols indexed by $C_j \subseteq [k]$. 
In a classical code, $C_j \equiv [k]$ for all $j \in [n]$.   
For $i \in [k]$ let $R_i \triangleq \{j: i \in C_j\}$. 
Then it is obvious that the support of the $i$th \emph{row} of any
valid generator matrix $\bG$ of a code with coding constraints
must be included in $R_i$. Similarly, the support of the $j$th
\emph{column} of $\bG$ must be included in $C_j$. 
The following theorem presents an upper bound on the minimum 
distance of a code with coding constraints and states that
an optimal code attaining this upper bound does exist over
a sufficiently large finite field. 

\begin{theorem}(\cite[Corollary 1]{HalbawiThillHassibi15}, \cite[Lemma 12]{SongDauYuen15})
\label{thm:coding_constraints}
Suppose that $\cC$ is a linear code that encodes the vector of data symbols $\bx \in \fq^k$ into the vector of coded symbols $\bc \in \fq^n$
under the following constraints: each $c_j$ is a function of a given subset of 
the data symbols indexed by $C_j \subseteq [k]$. 
Let $R_i \triangleq \{j: i \in C_j\}$. Then
\begin{equation}
\label{eq:d_constr}
\dist(\cC) \leq \dmax \triangleq 1 + \min_{\varnothing \neq I \subseteq [k]}\{|\cup_{i \in I}R_i| - |I|\}.
\end{equation}
Moreover, when $q$ is sufficiently large, there exists a code with minimum distance attaining this bound. 
\end{theorem}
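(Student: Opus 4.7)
The plan is to dispatch the two claims separately. For the upper bound $\dist(\cC)\le\dmax$, I would invoke the Singleton bound on a carefully chosen subcode. Fix any nonempty $I\subseteq[k]$ and set $S_I=\cup_{i\in I}R_i$. Every linear combination $\sum_{i\in I}a_i\bG_i$ of the corresponding rows of $\bG$ has support contained in $S_I$ because $\supp(\bG_i)\subseteq R_i$, so restricting these combinations to the coordinates in $S_I$ yields a linear code of length $|S_I|$ and dimension $|I|$ (the $|I|$ rows are independent as rows of a generator matrix). By the Singleton bound this subcode contains a nonzero vector of weight at most $|S_I|-|I|+1$, and since its support lies entirely inside $S_I$, it lifts to a codeword of $\cC$ of the same weight. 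Minimising over $I$ yields $\dist(\cC)\le\dmax$.

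For existence over large fields, I would form a symbolic generator matrix $\bG$ by placing an indeterminate at each position $(i,j)$ with $j\in R_i$ and $0$ elsewhere, and argue that a generic assignment realises $\dist(\cC)=\dmax$. Since distance $\ge\dmax$ is equivalent to every column set $J\subseteq[n]$ with $|J|=n-\dmax+1$ producing a submatrix $\bG[:,J]$ of rank $k$, it suffices to exhibit, for each such $J$, a $k\times k$ minor of $\bG[:,J]$ whose determinant is a nonzero polynomial in the indeterminates.

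The key step is a Hall-type verification. Consider the bipartite graph with parts $[k]$ and $J$ and edges $\{(i,j):j\in R_i\cap J\}$; the classical generic-rank-equals-maximum-matching principle (Edmonds) reduces the existence of a nonvanishing $k\times k$ minor to the existence of a matching saturating $[k]$. For any nonempty $I\subseteq[k]$,
\[
|S_I\cap J|\;\ge\;|S_I|-|[n]\setminus J|\;=\;|S_I|-(\dmax-1)\;\ge\;|I|,
\]
where the last inequality is exactly the defining inequality of $\dmax$ in \eqref{eq:d_constr}. Hall's condition therefore holds and the required matching exists. Taking the product of one such nonvanishing minor polynomial over all admissible $J$ yields a single nonzero polynomial $P$, and by the Schwartz--Zippel lemma any $\fq$ with $q$ exceeding the individual degree of $P$ admits an evaluation at which $P$ does not vanish, giving the desired optimal code.

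The main obstacle, conceptually, is the passage from the generic-rank/Hall statement to a bound on $q$ — the argument leaves $q$ only polynomially bounded in $n$ and $k$ and gives no explicit construction. The combinatorial heart of the proof, however, is the short Hall calculation above, which is essentially forced by how $\dmax$ is defined; no further combinatorial hurdle should arise beyond routine bookkeeping over the $\binom{n}{\dmax-1}$ column-sets $J$.
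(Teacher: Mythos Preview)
Your proposal is correct. The paper does not itself prove Theorem~\ref{thm:coding_constraints}---it is cited from external references---but your existence argument is essentially identical to the proof the paper gives for the closely related Theorem~\ref{thm:large_field}: form a symbolic generator matrix with indeterminates at the allowed positions, use Hall's marriage theorem on the bipartite graph between $[k]$ and each column set $J$ of size $n-\dmax+1$ (with exactly the same neighborhood estimate $|S_I\cap J|\ge |S_I|-(\dmax-1)\ge |I|$), deduce a nonvanishing $k\times k$ minor, and apply a Schwartz--Zippel-type argument to the product polynomial. The only cosmetic difference is that the paper exhibits the monomial $\prod_i \xi_{i,j_i}$ explicitly to certify nonvanishing of the minor, whereas you invoke Edmonds' generic-rank-equals-matching principle; these are the same observation. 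Your Singleton-bound argument for the upper bound is the standard one and is not reproduced in the paper.
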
 

\begin{proposition}
\label{pro:d-poly}
The upper bound $\dmax$ in Theorem~\ref{thm:coding_constraints} can be found in polynomial time. 
\end{proposition}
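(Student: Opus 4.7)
The plan is to reduce the computation of $\dmax$ to a sequence of bipartite maximum matching problems, each solvable in polynomial time. Build the bipartite graph $H$ on vertex parts $[k]$ and $[n]$ with edge set $\{(i,j) : i \in C_j\} = \{(i,j) : j \in R_i\}$, so that for every $I \subseteq [k]$, $|\cup_{i \in I} R_i| = |N_H(I)|$. We must compute $\dmax - 1 = \min_{\varnothing \neq I \subseteq [k]} \bigl(|N_H(I)| - |I|\bigr)$.

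The first step is to handle the restriction $I \neq \varnothing$ by the decomposition
\begin{equation*}
\min_{\varnothing \neq I \subseteq [k]}\bigl(|N_H(I)| - |I|\bigr) = \min_{i^* \in [k]} \min_{I \ni i^*}\bigl(|N_H(I)| - |I|\bigr).
\end{equation*}
For each fixed $i^* \in [k]$, write $I = I' \cup \{i^*\}$ with $I' \subseteq [k] \setminus \{i^*\}$ and note that $N_H(I) = R_{i^*} \cup N_H(I')$, so
\begin{equation*}
|N_H(I)| - |I| = |R_{i^*}| - 1 + |N_{H_{i^*}}(I')| - |I'|,
\end{equation*}
where $H_{i^*}$ denotes the bipartite subgraph of $H$ induced by the parts $[k] \setminus \{i^*\}$ and $[n] \setminus R_{i^*}$. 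The point of removing $R_{i^*}$ from the right-hand part is that for any $I' \subseteq [k] \setminus \{i^*\}$, the contribution of $N_H(I') \cap R_{i^*}$ is already accounted for in the constant $|R_{i^*}|$.

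The second step is to apply the defect form of König's theorem (equivalently, the deficiency version of Hall's theorem) to $H_{i^*}$:
\begin{equation*}
\min_{I' \subseteq [k]\setminus\{i^*\}} \bigl(|N_{H_{i^*}}(I')| - |I'|\bigr) = \nu(H_{i^*}) - (k-1),
\end{equation*}
where $\nu(H_{i^*})$ is the size of a maximum matching in $H_{i^*}$ (the empty set $I' = \varnothing$ is harmless because $\nu(H_{i^*}) \le k-1$, so the right-hand side is $\le 0$). Combining the two displays,
\begin{equation*}
\dmax - 1 = \min_{i^* \in [k]} \Bigl(|R_{i^*}| - k + \nu(H_{i^*})\Bigr).
\end{equation*}

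The final step is algorithmic: each $\nu(H_{i^*})$ is a maximum bipartite matching in a graph with at most $k + n$ vertices and $\sum_j |C_j|$ edges, computable in polynomial time (e.g.\ by Hopcroft--Karp). Running this for each of the $k$ choices of $i^*$ and taking the minimum gives $\dmax$ in polynomial time. The main conceptual obstacle is the non-emptiness constraint on $I$, which prevents a direct one-shot application of König's theorem to $H$; the elementwise decomposition and the pruning of $R_{i^*}$ from the right-hand part of $H$ are the technical devices that resolve it.
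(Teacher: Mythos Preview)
Your proof is correct and takes a route different from the paper's. The paper iterates over candidate values of $d$ (from $n-k+1$ down to $1$) and, for each $d$, verifies the Hall-type condition $|\cup_{i\in I}R_i| \ge d-1+|I|$ for all nonempty $I$ by reducing it to a min-cut computation on a single-source, $k$-sink network built in an earlier paper of the authors. You instead derive the explicit expression $\dmax = 1 + \min_{i^*\in[k]}\bigl(|R_{i^*}| - k + \nu(H_{i^*})\bigr)$ via the deficiency form of Hall's theorem, requiring only $k$ bipartite matching computations and no outer loop over $d$. Your argument is more self-contained (it does not appeal to an external network construction) and modestly more efficient; the paper's approach, by contrast, recycles an existing lemma and phrases the verification in terms of min-cuts, which are of course equivalent to bipartite matchings through max-flow/min-cut duality. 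Both are valid polynomial-time arguments; the key technical wrinkle you handle cleanly---forcing $I\ne\varnothing$ by fixing an element $i^*$ and pruning $R_{i^*}$ from the right part before invoking the defect formula---has no counterpart in the paper's proof, which sidesteps the issue by checking a threshold condition rather than computing the minimum directly.
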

\begin{proof}
Note that $\dmax$ is the largest $d$ satisfying
\begin{equation}
\label{eq:3.1}
|\cup_{i \in I} R_i| \geq d - 1 + |I|, \quad \forall \varnothing 
\neq I \subseteq [k]. 
\end{equation}
From the Singleton bound, $d \leq n - k + 1$. 
Hence, we can find $\dmax$ by verifying (\ref{eq:3.1}) for each $d$
ranging from $n-k+1$ down to $1$. As long as (\ref{eq:3.1})
can be verified in polynomial time for every $d \in [n-k+1]$, 
we can find $\dmax$ in polynomial time. 

Note that any $d \in [n-k+1]$ can be written as $d = (n-k+1) - \delta$, 
for some $0 \leq \delta \leq n - k$. Hence, (\ref{eq:3.1}) can 
be rewritten as 
\begin{equation}
\label{eq:3.3}
|\cup_{i \in I} R_i| \geq n - k - \delta + |I|, \quad \forall \varnothing 
\neq I \subseteq [k]. 
\end{equation} 

In the proof of \cite[Lemma 10]{DauSongYuen_JSAC2014}, we provide
a polynomial time algorithm to verify the so-called MDS Condition
\begin{equation}
\label{eq:3.2}
|\cup_{i \in I} R_i| \geq n - k + |I|, \quad \forall \varnothing 
\neq I \subseteq [k],
\end{equation} 
where $R_1,\ldots,R_k$ are arbitrary nonempty subsets of $[n]$. 
We do so by creating a network with one source and $k$ sinks
and prove that (\ref{eq:3.2}) holds if and only if the capacity of
a minimum cut between the source and any sink is at least $n$. 
 
Using exactly the same proof, we can show that (\ref{eq:3.3}) 
holds if and only if the capacity of a
minimum cut between the source and any sink of that network is at least $n - \delta$. 
As the capcity of such a minimum cut can be computed in polynomial time, 
(\ref{eq:3.3}) can be verified in polynomial time and so can (\ref{eq:3.1}). 
\end{proof}

\subsection{Optimal Locally Encodable and Decodable Codes Over
Large Fields}
\label{subsec:large_field}

In this subsection we establish that an optimal LEDC
always exists over a sufficiently large field. 

\begin{theorem}
\label{thm:large_field}
Suppose that $\cC$ is a linear {\ledc}.  
Let $n_i = |N_i|$, $n = \sum_{i=1}^m n_i$, and $k = |\cup_{i = 1}^m K_i|$.
For $i \in [m]$ and $j \in N_i$ let $C_j = K_i$. 
For each $i \in [k]$ let $R_i = \{j: i \in C_j\}$.
Then
\begin{equation} 
\label{eq:d}
\dist(\cC) \leq \dmax \triangleq 1 + \min_{\varnothing \neq I \subseteq [k]}\{|\cup_{i \in I}R_i| - |I|\}.
\end{equation} 
Moreover, when $q$ is sufficiently large, there exists a linear {\ledc}
over $\fq$ with minimum distance attaining this bound. 
\end{theorem}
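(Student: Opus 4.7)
The plan is to split the statement into two independent parts: the Singleton-type upper bound $\dist(\cC) \le \dmax$, and the existence of a linear \ledc{} attaining this bound over a sufficiently large field.

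For the upper bound, I would simply observe that any linear \ledc{} is a special instance of the coding-with-constraints setup of Section~\ref{subsec:coding_constraints}: by property (P1) of Definition~\ref{def:LEDC}, for each $j \in N_i$ the coded symbol $c_j$ depends only on data symbols indexed by $C_j \triangleq K_i$. The sets $R_i = \{j : i \in C_j\}$ defined in the present theorem are exactly those arising from this choice of $C_j$'s, so the inequality $\dist(\cC) \le \dmax$ is inherited verbatim from Theorem~\ref{thm:coding_constraints}.

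For existence, one cannot simply invoke Theorem~\ref{thm:coding_constraints}: the code it produces satisfies (P1) and attains $\dmax$, but a priori need not satisfy property (P2), namely that each local subcode on $(\N_i, \K_i)$ be MDS. I would therefore adapt the polynomial-non-vanishing argument underlying the existence part of Theorem~\ref{thm:coding_constraints}. Parametrize $\bG$ so that every entry at a position $(r, j)$ with $j \in N_i$ and $r \in K_i$ is a free indeterminate, while every other entry is forced to zero. The existing argument furnishes a non-zero polynomial $P$ in these indeterminates whose non-vanishing guarantees $\dist(\cC) \ge \dmax$. Augment $P$ with polynomials $Q_{i,J}$, one for each $i \in [m]$ and each $k_i$-subset $J \subseteq N_i$, defined as the determinant of the $k_i \times k_i$ submatrix of $\bG$ with rows indexed by $K_i$ and columns indexed by $J$; the local-MDS requirement (P2) is then precisely $\prod_{i,J} Q_{i,J} \neq 0$.

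The key step is to verify that $P \cdot \prod_{i,J} Q_{i,J}$ is a non-zero element of the polynomial ring. Since this ring is an integral domain, it suffices to check each factor. The factor $P$ is non-zero by the existence portion of Theorem~\ref{thm:coding_constraints}. Each $Q_{i,J}$ is the determinant of a matrix whose entries are pairwise distinct indeterminates (no forced zero ever occurs inside a single local block), so its expansion contains a non-trivial diagonal monomial and hence $Q_{i,J}$ is a non-zero polynomial. Consequently the full product is non-zero, and by the Schwartz--Zippel lemma, once $q$ exceeds the total degree of this product there is a common non-vanishing assignment in $\fq$. That assignment yields a generator matrix for a linear \ledc{} whose minimum distance equals $\dmax$. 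The main obstacle is really just bookkeeping: one must check that the indeterminates parametrizing $P$ coincide with those used in the local-MDS factors $Q_{i,J}$, so that the integral-domain argument applies within a single polynomial ring. Once one observes that both families of conditions naturally live on the same set of free entries of $\bG$, namely those at positions $(r,j)$ with $r \in K_i$ and $j \in N_i$, the combined argument goes through with no new technical ideas.
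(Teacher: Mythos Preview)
Your proposal is correct and follows essentially the same route as the paper: the upper bound is inherited from Theorem~\ref{thm:coding_constraints}, and existence is proved by multiplying the distance-guaranteeing polynomial (the paper calls it $\fkF^{\text{dist}}$) by the product of local $k_i\times k_i$ minors (the paper's $\fkF^{\text{MDS}}$), checking each factor is nonzero, and invoking a Schwartz--Zippel-type lemma. The only cosmetic difference is that the paper spells out the Hall-matching argument showing $\fkF^{\text{dist}}\not\equiv 0$ rather than citing the proof of Theorem~\ref{thm:coding_constraints} as a black box.
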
 

The upper bound (\ref{eq:d}) simply follows from Theorem~\ref{thm:coding_constraints}. 
Note that by Proposition~\ref{pro:d-poly}, this upper bound
can be determined in polynomial time. 
We present below a proof of 
the existence of optimal LEDCs over large fields. 
We aim to show that when $q$ is sufficiently large,
there always exists a $k \times n$ matrix $\bG$ over $\fq$
that generates an {\ledc} with minimum distance attaining 
(\ref{eq:d}). 

Firstly, observe that if $\bG$ is a generator matrix of 
an {\ledc} then by (P1), $\supp(\bG_i) \subseteq R_i$ and 
$\supp(\bG[j]) \subseteq C_j$ for all $i \in [k]$ and $j \in [n]$.
Let $\bG^\xi = (g^\xi_{i,j})_{k\times n}$ where 
\vspace{-5pt}
\[
g^\xi_{i,j} = 
\begin{cases}
\xi_{i,j}, &\text{if } j \in R_i,\\
0, &\text{otherwise,}
\end{cases}
\]
where $\xi_{i,j}$'s are indeterminates. 

For any subset $J \subseteq [n]$ of size $n - \dmax + 1$
let $\cG^J = (\cV,\cE)$ be the bipartite graph defined
as follows. The vertex set $\cV$ can be partitioned into two
parts, namely, the left part ${\sf{L}} = \{\ell_1, \ldots, \ell_k\}$, and the right part 
${\sf{R}} = \{{\sf{r}}_j: j \in J\}$. The edge set is  \vspace{-5pt}
\[ 
\cE = \big\{ (\ell_i, {\sf{r}}_j) \colon i \in [k], \ j \in J \cap R_i\big\}. 
\]
Then for every $\varnothing \neq I \subseteq [k]$, the neighbor set of $\{\ell_i: i \in I\}$ has size at least
\[
\begin{split}
|N(\{\ell_i: i \in I\})| &= |\cup_{i \in I} (J\cap R_i)|\\ 
&= |J \cap \big(\cup_{i \in I} R_i\big)|\\
&= |\big(\cup_{i \in I} R_i\big) \setminus 
\big([n] \setminus J \big)|\\
&\geq |\big(\cup_{i \in I} R_i\big)| - (n-|J|)\\
&\geq (\dmax + |I| - 1) - (n - (n-\dmax+1))\\
&=|I|.
\end{split}
\] 
Hence, according to the famous Hall's marriage theorem, 
there is a matching of size $k$ in $\cG^J$. In other words,
there exist $k$ distinct indices $j_1, \ldots, j_k$ in $J$ such that
$j_i \in R_i$ for all $i \in [k]$.  

For each subset $J \subseteq [n]$ of size $n - \dmax + 1$, 
we consider the submatrix $\bP^J$ of $\bG^\xi$ that consists
of the columns indexed by $j_1, \ldots, j_k$ as discussed above. 
Then the determinant $\det(\bP^J)$, which is a multivariable polynomial
in $\fq[\ldots, \xi_{i,j},\ldots]$, is not identically zero. 
The reason is that in the expression of $\det(\bP^J)$ as a sum
of monomials, there is one monomial that cannot be canceled out, 
namely $\prod_{i=1}^k \xi_{i,j_i}$. 
Let 
\[
\fkF^{\text{dist}} = \prod_{\genfrac{}{}{0pt}{}{J \subseteq [n]}{|J| = n-\dmax+1}}
\det(\bP^J) \in \fq[\ldots, \xi_{i,j},\ldots]. 
\]
Then $\fkF^{\text{dist}}$ is not identically zero. 
Roughly speaking, the polynomial $\fkF^{\text{dist}}$ captures the locality structure $\{N_i,K_i\}_{i=1}^m$ of the LEDC and the desired minimum distance $\dmax$. 

Regarding the locality for decoding, i.e. every $k_i$
coded symbols in $\N_i$ can be used to recover all data symbols in 
$\K_i$, let 
\[
\fkF^{\text{MDS}} = \prod_{i=1}^m 
\prod_{\bQ_i} \det(\bQ_i) \in \fq[\ldots, \xi_{i,j}, \ldots],
\]
where the second product is taken over all $k_i \times k_i$ submatrices
$\bQ_i$ that consist of $k_i$ rows of $\bG^\xi$ indexed by $K_i$
and some $k_i$ columns of $\bG^\xi$ among those indexed by $N_i$. 
By definition of $\bG^\xi$, each of such matrices $\bQ_i$ has a
nontrivial determinant. Therefore, $\fkF^{\text{MDS}}$ is not identically zero. 

Thus, 
\[
\fkF = \fkF^{\text{dist}} \times \fkF^{\text{MDS}} \not\equiv \bzero.
\] 
Therefore, by \cite[Lemma 4]{Ho2006}, 
for sufficiently large $q$, there exist $g_{i,j} \in \fq$
such that $\fkF(\ldots,g_{i,j}, \ldots,) \neq 0$. Let $\bG = (g_{i,j})$
(for $i,j$ where $j \notin R_i$, simply let $g_{i,j} = 0$). 
Since $\fkF^{\text{dist}}(g_{i,j}) \neq 0$, the linear code generated
by $\bG$ has minimum distance at least $\dmax$. 
Moreover, since $\fkF^{\text{MDS}}(g_{i,j}) \neq 0$, 
the coded symbols in each $\N_i$ form an $[n_i,k_i]$ MDS code. 
Hence, $\bG$ generates an optimal LEDC. 
We complete the proof of Theorem~\ref{thm:large_field}. 

\section{Optimal Locally Encodable and Decodable Codes With Two Local Subcodes}
\label{sec:cyclic}

In this section we restrict ourselves to the case $m=2$, i.e.
when the LEDC has exactly two local subcodes. We provide two 
constructions of optimal LEDCs over fields of sizes linear in $n$. 

Without loss of generality, we consider  the following locality structure:
\begin{align*}
K_1&=\{1,2,\ldots,k_1\},\ K_2 =\{k_1-t+1,k_1-t+2,\ldots,k\},\\
N_1&=\{1,2,\ldots,n_1\},\ N_2 =\{n_1+1,n_1+2,\ldots,n\}.
\end{align*}
Here $t= |K_1 \cap K_2|$ is the number of common data symbols
shared by the two local subcodes.
For brevity purpose, we denote an LEDC with this locality structure as an $[n_1,k_1;n_2,k_2;t]$-LEDC.


When $t = k$, i.e. $K_1 = K_2 = [k]$, there is no locality in 
encoding, as both subcodes use all of the data symbols in their
encoding process. An optimal LEDC is simply an $[n,k]$ MDS code
with minimum distance $d = n - k + 1$. 
In the remainder of this section, we always assume that $t < k$. 
The following is a straightforward corollary of Theorem~\ref{thm:large_field}.
 
\begin{corollary}
Suppose that $t = |K_1 \cap K_2| < k$. 
Additionally, assume that either $t < \min\{k_1,k_2\}$ or $n_1-k_1 = n_2 - k_2$.
If there exists a linear $[n_1,k_1;n_2,k_2;t]$-LEDC with minimum
distance $d$ then 
\begin{equation}
\label{eq:d2}
d \leq 1 + t + \min\{n_1 - k_1, n_2 - k_2\}.
\end{equation}
\end{corollary}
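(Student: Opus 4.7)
The plan is to apply Theorem~\ref{thm:large_field} directly with two carefully chosen test sets $I \subseteq [k]$. First I would spell out the sets $R_i$ induced by this locality structure: since $C_j = K_1$ for $j \in N_1$ and $C_j = K_2$ for $j \in N_2$, the definition $R_i = \{j : i \in C_j\}$ gives
\begin{equation*}
R_i = \begin{cases} N_1, & i \in K_1 \setminus K_2,\\ [n], & i \in K_1 \cap K_2,\\ N_2, & i \in K_2 \setminus K_1. \end{cases}
\end{equation*}

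The main step is to plug two natural choices of $I$ into the bound \eqref{eq:d}. Taking $I = K_1 \setminus K_2$ (of size $k_1 - t$) yields $\bigcup_{i \in I} R_i = N_1$, so Theorem~\ref{thm:large_field} gives $d \leq 1 + n_1 - (k_1 - t) = 1 + t + (n_1 - k_1)$. Symmetrically, taking $I = K_2 \setminus K_1$ (of size $k_2 - t$) gives $d \leq 1 + t + (n_2 - k_2)$. Minimizing over the two bounds produces \eqref{eq:d2}, provided that each chosen $I$ is nonempty.

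The only delicate point is precisely this nonemptiness: $I = K_1 \setminus K_2$ is nonempty iff $t < k_1$, and $I = K_2 \setminus K_1$ is nonempty iff $t < k_2$. Under the first hypothesis $t < \min\{k_1, k_2\}$ both test sets are nonempty, so both bounds apply and we may take their minimum. Under the second hypothesis $n_1 - k_1 = n_2 - k_2$, the two bounds coincide, so it suffices to produce just one of them. Here I would argue that at least one of $K_1 \setminus K_2$, $K_2 \setminus K_1$ is nonempty, since otherwise $K_1 = K_2$, forcing $t = k_1 = k_2 = k$, contradicting the standing assumption $t < k$; whichever test set is nonempty delivers the (common) bound $1 + t + (n_1 - k_1) = 1 + t + (n_2 - k_2)$.

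There is no real obstacle beyond this bookkeeping. The mild subtlety is simply to verify that each of the hypotheses in the corollary suffices to supply at least one valid (nonempty) choice of $I$ among $K_1 \setminus K_2$ and $K_2 \setminus K_1$, since an empty $I$ is excluded from the minimum in Theorem~\ref{thm:large_field}.
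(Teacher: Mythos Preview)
Your argument is correct and is precisely the intended derivation: the paper states this as ``a straightforward corollary of Theorem~\ref{thm:large_field}'' without proof, and your choice of test sets $I = K_1\setminus K_2$ and $I = K_2\setminus K_1$ together with the nonemptiness bookkeeping is exactly how one unpacks that remark.
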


The upper bound \eqref{eq:d2} reflects the fact that the more common data 
the local subcodes have, the larger the global minimum distance of the LEDC. In one extreme case when $t = 0$, i.e. the two subcodes share no common data symbols, the global minimum distance is equal to one local minimum distance, whichever smaller. 
Hence, the global LEDC offers no further protection against erasures to each local subcode. When $t$ is sufficiently large, however, the
global LEDC can offer considerable amount of additional protection
against erasures. 

We henceforth assume, without loss of generality, that $n_1 -k_1
\leq n_2 - k_2$. Then \eqref{eq:d2} can be rewritten as
\begin{equation}
\label{eq:d2_reduced}
d \leq 1 + t + n_1 - k_1.
\end{equation}
We aim to construct optimal LEDCs whose minimum distances meet the upper bound \eqref{eq:d2_reduced}.

In what follows, we consider a linear $[n_1,k_1;n_2,k_2;t]$-LEDC $\cC$ and 
describe $\cC$ via its generator matrix $\bG$.
From the local encoding property, we may write $\bG$ as
\begin{equation}
\label{mat:G}
\begin{blockarray}{ccl}
\ n_1 & \ n_2 & \\
\begin{block}{@{\hspace*{2pt}}(c|c@{\hspace*{2pt}})l}
 \ \bU \quad & \bzero \ & \quad k_1-t \\
	\cline{1-2}
 \ \bA & \bB \ & \quad t \\
	\cline{1-2}
 \ \bzero & \bV \ & \quad k_2-t \\
\end{block}
\end{blockarray}
\vspace{-5pt}
\end{equation}
\noindent where $\bU$, $\bA$, $\bB$, and $\bV$ are $(k_1-t)\times n_1$, 
 $t\times n_1$, $t\times n_2$, and $(k_2-t)\times n_2$ matrices, respectively.
From the properties of an LEDC, the linear code $\cA$ with generator matrix 
$\bG_\cA\triangleq\left(\begin{array}{c} \bU \\\hline \bA \end{array}\right)$ 
is an $[n_1,k_1]$ MDS code and
the linear code $\cB$ with generator matrix 
$\bG_\cB\triangleq\left(\begin{array}{c} \bB \\\hline \bV \end{array}\right)$ 
is an $[n_2,k_2]$ MDS code.

\subsection{Optimal LEDCs from Nested MDS codes}

Our first construction uses pairs of nested MDS codes.
More formally, for $1\leq k'<k\leq n$, the pair $(\cC',\cC)$ is a pair of nested $(k',k;n)$ MDS codes
if $\cC'\subset\cC$ and $\cC'$ and $\cC$ are $[n,k']$ and $[n,k]$ MDS codes, respectively.

\begin{theorem}[Construction I] 
\label{thm:constructionI}
Suppose $n_2-k_2+1\geq t$ and 
there exist pairs of nested $(k_1-t,k_1;n_1)$ and $(k_2-t,k_2;n_2)$ MDS codes.
Then there exists an optimal $[n_1,k_1;n_2,k_2;t]$-LEDC $\cC$ whose minimum distance is $n_1-k_1+t+1$.
\end{theorem}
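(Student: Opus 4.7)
The plan is to build the generator matrix (\ref{mat:G}) directly from the two given nested MDS pairs. Let $\cA' \subset \cA$ be the hypothesized nested $(k_1-t,k_1;n_1)$ MDS pair, and pick a generator matrix of $\cA$ whose top $k_1-t$ rows generate $\cA'$; call these blocks $\bU$ and $\bA$, so $\bG_\cA=\bigl(\begin{smallmatrix}\bU\\ \bA\end{smallmatrix}\bigr)$. Analogously, from the nested $(k_2-t,k_2;n_2)$ pair $\cB'\subset\cB$, extract blocks $\bB$ (first $t$ rows) and $\bV$ (last $k_2-t$ rows) so that $\bG_\cB=\bigl(\begin{smallmatrix}\bB\\ \bV\end{smallmatrix}\bigr)$ generates $\cB$. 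Assembling $\bG$ as in (\ref{mat:G}) with these four blocks immediately yields an $[n_1,k_1;n_2,k_2;t]$-LEDC: property (P1) is built into the block structure, and property (P2) is the assertion that $\bG_\cA$ and $\bG_\cB$ generate MDS codes.

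The next step is to lower-bound the minimum weight by case analysis on the shared coordinates. Write a data vector $\bx=(\bx^{(1)},\bx^{(s)},\bx^{(2)})$ with $\bx^{(s)}\in\fq^t$, so the codeword splits as $\bc_L=\bx^{(1)}\bU+\bx^{(s)}\bA \in \cA$ on $N_1$ and $\bc_R=\bx^{(s)}\bB+\bx^{(2)}\bV \in \cB$ on $N_2$. If $\bx^{(s)}=\bzero$, then $\bc_L\in\cA'$ and $\bc_R\in\cB'$; using the MDS distances of the nested subcodes and the standing assumption $n_1-k_1\le n_2-k_2$, any nonzero $\bc_L$ has $\weight(\bc_L)\ge n_1-k_1+t+1$ and any nonzero $\bc_R$ has $\weight(\bc_R)\ge n_2-k_2+t+1\ge n_1-k_1+t+1$. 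If instead $\bx^{(s)}\ne\bzero$, the rows of $\bG_\cA$ (resp. $\bG_\cB$) are linearly independent, so both $\bc_L$ and $\bc_R$ are nonzero codewords of the outer MDS codes, and
\[
\weight(\bc_L)+\weight(\bc_R)\ge (n_1-k_1+1)+(n_2-k_2+1)=(n_1-k_1+t+1)+(n_2-k_2+1-t).
\]
Here the hypothesis $n_2-k_2+1\ge t$ is used in exactly the right place, guaranteeing the total weight is at least $n_1-k_1+t+1$ in this case as well.

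Finally, tightness follows because taking $\bx^{(s)}=\bzero$, $\bx^{(2)}=\bzero$, and $\bx^{(1)}$ equal to any vector producing a minimum-weight codeword of $\cA'$ gives an LEDC codeword of weight exactly $n_1-k_1+t+1$, matching the upper bound (\ref{eq:d2_reduced}). The only step that is not routine is recognizing that the asymmetric hypothesis $n_2-k_2+1\ge t$ is precisely the slack needed in the $\bx^{(s)}\ne\bzero$ case; this is the point where the construction would fail if the parameters were too unbalanced, and it highlights why a more subtle construction (carried out in the next subsection) is needed when this inequality is violated. One practical remark: nested MDS pairs of the required parameters exist over any $\fq$ with $q\ge \max\{n_1,n_2\}-1$ via Reed--Solomon codes obtained by restricting the message degree, so the construction is realizable over fields of size linear in $n$.
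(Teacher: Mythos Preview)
Your proof is correct and follows essentially the same approach as the paper: build $\bG$ from the two nested MDS pairs so that $\bU$ and $\bV$ generate the inner codes, then do a case split on whether the shared block $\bx^{(s)}$ is zero, invoking the inner-code distances in the first case and summing the two outer-code distances (using $n_2-k_2+1\ge t$) in the second. Your additional tightness witness and the closing remark on field size are extras not in the paper's proof; the paper instead appeals directly to the upper bound \eqref{eq:d2_reduced}, and quotes $q\ge \max\{n_1,n_2\}$ (rather than $q\ge \max\{n_1,n_2\}-1$) for the existence of nested MDS pairs.
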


\begin{proof}
Let $(\cA',\cA)$ be the pair of nested  $(k_1-t,k_1;n_1)$ MDS codes with generator matrices $\bU$ and $\left(\begin{array}{c} \bU \\\hline \bA \end{array}\right)$.
Similarly, let $(\cB',\cB)$ be the pair of nested  $(k_2-t,k_2;n_2)$ MDS codes with generator matrices $\bV$ and $\left(\begin{array}{c} \bB \\\hline \bV \end{array}\right)$.

Then we consider a typical nonzero codeword $\bc=(\bx_{1},\bx_{2},\bx_{3})\bG$, where $\bx_1=(x_i:i\in K_1\setminus K_2)$, $\bx_2=(x_i:i\in K_1\cap K_2)$ and
$\bx_3=(x_i:i\in K_2\setminus K_1)$. Hence, $\bc=(\bx_1\bU+\bx_2\bA,\bx_2\bB +\bx_3\bV)$. We have the following cases.
\begin{enumerate}[(i)]
\item When $\bx_2=\bzero$ and $\bx_1\ne \bzero$, then the first $n_1$ coordinates $\bx_1\bU$ of $\bc$ is a nonzero codeword from $\cA'$ and has weight at least $n_1-k_1+t+1$.
\item When $\bx_2=\bzero$ and $\bx_3\ne \bzero$, then the last $n_2$ coordinates $\bx_2\bV$ of $\bc$ is a nonzero codeword from $\cB'$ and has weight at least $n_2-k_2+t+1 \geq n_1 - k_1 + t + 1$.
\item When $\bx_2\ne \bzero$, then $\bc$ consists of nonzero codewords from $\cA$ and $\cB$. Hence, $\bc$ has weight at least $(n_1-k_1+1)+(n_2-k_2+1) \geq n_1 - k_1 + t + 1$, because we assume $n_2-k_2+1\geq t$.
\end{enumerate}
Therefore, the minimum distance of $\cC$ is $n_1-k_1+t+1$, achieving 
the upper bound in \eqref{eq:d2_reduced}.
\end{proof}

Pairs of nested MDS codes were studied by Ezerman~{\et}~\cite{Ezerman.etal:2013} in the context of quantum codes.
Specifically, pairs of nested MDS codes of all possible parameters 
(assuming the MDS conjecture) were constructed and 
in general, a pair of $(k',k;n)$ MDS codes exists if $n\leq q$.
Therefore, $\max\{n_1,n_2\}\leq q$ suffices for Construction I.
The LEDC in Fig~\ref{fig:toy-example} is, in fact, constructed
using two pairs of nested codes over $\ff_7$, generated by 
Vandemonde matrices. Note that a Cauchy matrix also generates a pair of nested MDS codes, while requires a larger field size than a Vandermonde matrix. 

When the assumption in Theorem~\ref{thm:constructionI} does not hold, 
i.e. $t > 1+\max\{n_1-k_1, n_2-k_2\}$, Construction I may fail to produce
an optimal LEDC. We demonstrate this fact below. 

\begin{example} 
\label{ex:counter_ex}
Let $K_1 = \{1,2,3,4\}$, $K_2 = \{2,3,4,5\}$, and $n_1 = n_2 = 5$. 
In this case, $t = |K_1 \cap K_2| = 3 > 1+\max\{n_1-k_1,n_2-k_2\}$. 
Hence, the assumption of Theorem~\ref{thm:constructionI} is violated. 
In fact, using two Vandermonde matrices over $\ff_7$ for the local subcodes
results in the following generator matrix:
\[
\bG = 
\begin{pmatrix}
1 & 1 & 1 & 1 & 1 & 0 & 0 & 0 & 0 & 0\\
1 & 2 & 3 & 4 & 5 & 1 & 1 & 1 & 1 & 1\\
1 & 4 & 2 & 2 & 3 & 1 & 2 & 3 & 4 & 5\\
1 & 1 & 6 & 1 & 6 & 1 & 4 & 2 & 2 & 3\\
0 & 0 & 0 & 0 & 0 & 1 & 1 & 6 & 1 & 6 
\end{pmatrix}.
\]
This matrix generates an LEDC of distance $d = 4$, which does not reach 
the upper bound \eqref{eq:d2_reduced}. Hence, it is not optimal.  
\end{example}

\subsection{Algebraic Construction of Optimal LEDCs}

We aim to construct optimal $[n_1,k_1; n_2, k_2;t]$-LEDCs whenever $n_1+n_2 \leq q-1$ and $r\triangleq n_1-k_1=n_2-k_2$.
Such LEDCs have minimum distance achieving the bound \eqref{eq:d2_reduced}, i.e. $d = r + t + 1$. 
The generator matrices $\bG$ of such optimal LEDCs
have the block form given in \eqref{mat:G}.  
In what follows, we find appropriate field elements $\{u_j,v_j: 0\leq j\leq r+t-1\}$, $\{a^{(\ell)}_j: 1\leq \ell\leq t, 0\leq j\leq r+t-\ell \}$, $\{b^{(\ell)}_j: 1\leq \ell\leq t, 0\leq j\leq r+ \ell-1\}$
and construct four component matrices of $\bG$, namely $\bU$, $\bV$, $\bA$, and $\bB$ 
in the following forms. 

{\small
\[
\begin{split}
\bU &=
\left(\begin{array}{ccc ccc ccc ccc}
u_0 & u_1  & \cdots  & \cdots & u_{r+t} & 0 &   \cdots & 0 \\
0 & u_0  & u_1  & \cdots  & u_{r+t-1} & u_{r+t} &  \cdots & 0 \\
\vdots & \vdots   & \ddots & \ddots   & \ddots & \ddots & \ddots & \vdots \\
0 & 0 & \cdots  & u_0  & \cdots & \cdots   & u_{r+t-1} & u_{r+t}
\end{array}\right),\\
\bA &=
\left(\begin{array}{cc cccc cccccc}
 0 & 0 & \cdots & 0 & a_0^{(1)} & a_1^{(1)} & \cdots  & \cdots & \cdots & a_{r+t-1}^{(1)} \\
 0 & 0 & \cdots & 0 & 0 & a_0^{(2)} & \cdots  & \cdots & \cdots & a_{r+t-2}^{(2)} \\
\vdots & \vdots & \ddots & \vdots & \vdots& \vdots & \ddots& \ddots& \ddots& \vdots \\
 0 & 0 & \cdots & 0 & 0 & 0 & \cdots & a_0^{(t)}  & \cdots  & a_{r}^{(t)} \\
\end{array}\right),\\
\bB &=
\left(\begin{array}{cc cccc cccccc}
0 & 0 & \cdots & 0 & 0 &  \cdots & 0 & b_0^{(1)} & \cdots & b_{r}^{(1)}\\ 
0 & 0 & \cdots & 0 & 0 & \cdots & b_0^{(2)} & b_1^{(2)} & \cdots & b_{r+1}^{(2)}\\ 
\vdots & \vdots & \reflectbox{$\ddots$} & \vdots & \vdots& \reflectbox{$\ddots$} & \reflectbox{$\ddots$}& \reflectbox{$\ddots$}& \reflectbox{$\ddots$}& \vdots \\
0 & 0 & \cdots & 0 & b_0^{(t)} & \cdots & \cdots & \cdots & \cdots & b_{r+t-1}^{(t)}\\ 
\end{array}\right),\\
\bV &=
\left(\begin{array}{ccc ccc ccc ccc}
0 & 0 & \cdots  & v_0  & \cdots & \cdots & v_{r+t-1} & v_{r+t}\\
\vdots & \vdots  & \reflectbox{$\ddots$} & \reflectbox{$\ddots$} & \reflectbox{$\ddots$}   & \reflectbox{$\ddots$} & \reflectbox{$\ddots$}      & \vdots \\
0 & v_0 & v_1 & \cdots   & v_{r+t-1} & v_{r+t} &   \cdots & 0 \\
v_0 & v_1 & \cdots & \cdots & v_{r+t} & 0  &  \cdots & 0 \\
\end{array}\right).
\end{split}
\]
}
In this subsection, we demonstrate the existence of field elements
$u_j$'s, $v_j$'s, $a_j^{(\ell)}$, and $b_j^{(\ell)}$ such that the above construction yields a generator matrix $\bG$ of an optimal LEDC.
In particular, we prove the following theorem.

\begin{theorem}[Construction II] 
\label{thm:small_field}
Suppose $n_1-k_1=n_2-k_2=r$ and $n_1+n_2\leq q-1$. 
Then there exists an optimal $[n_1,k_1;n_2,k_2;t]$-LEDC $\cC$ whose minimum distance is $r+t+1$.
\end{theorem}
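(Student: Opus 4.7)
The plan is to choose the scalars $u_j, v_j, a_j^{(\ell)}, b_j^{(\ell)}$ by realizing $\cC$ as a (punctured) subcode of a Reed--Solomon cyclic code of length $q-1$ over $\fq$, so that the four blocks in \eqref{mat:G} simultaneously acquire the displayed Toeplitz/band form and guarantee: (a) $\cA$ is an $[n_1,k_1]$ MDS code, (b) $\cB$ is an $[n_2,k_2]$ MDS code, and (c) $\dist(\cC)\ge r+t+1$. Combined with \eqref{eq:d2_reduced}, (c) forces $\dist(\cC)=r+t+1$.

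Let $\alpha$ be a primitive element of $\fq$. Since $n:=n_1+n_2\le q-1$, the powers $\alpha^0,\ldots,\alpha^{n-1}$ are distinct, so I identify position $i$ with $\alpha^{i-1}$ and let $\E$ be the $[n,n-r-t,r+t+1]_q$ Reed--Solomon code obtained by evaluating polynomials of degree $<n-r-t$ at these $n$ points. First I would define $u(x)$ to be a degree-$(r+t)$ polynomial whose $r+t$ roots form a prescribed subset of $\{\alpha^{n_1},\ldots,\alpha^{n-1}\}$ (the $N_2$ evaluation points). Each shifted polynomial $x^{i-1}u(x)$ for $1\le i\le k_1-t$ then has degree at most $n_1-1<n-r-t$ (using $k_2\ge t$), lies in $\E$, and vanishes on the chosen $r+t$ positions of $N_2$; its coefficient vector is precisely the $i$th row of the $\bU$ block. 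A symmetric choice of $v(x)$ with roots inside $\{\alpha^0,\ldots,\alpha^{n_1-1}\}$ populates the $\bV$ block. The remaining $t$ overlap rows, encoding the shared data indexed by $K_1\cap K_2$, would be chosen as polynomials $f_1,\ldots,f_t\in\E$ whose left-half coefficient support equals $\{n_1-r-t+\ell,\ldots,n_1\}$ and right-half support equals $\{n-r-\ell+1,\ldots,n\}$, so that their leading coefficients produce the prescribed triangular staircase shapes of $\bA$ and $\bB$.

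Verification then splits into three parts. Property (c) is immediate because $\cC\subseteq\E$. For (a), I would project $\cC$ onto $N_1$: the image is spanned by the $k_1-t$ left halves coming from $\bU$, which by the choice of roots of $u$ form a generalized Reed--Solomon code on $N_1$, together with the $t$ left halves of $f_1,\ldots,f_t$; a direct dimension count shows these $k_1$ vectors span a generalized Reed--Solomon code of parameters $[n_1,k_1,r+1]$, which is MDS. Property (b) is symmetric after reversing coordinates, which accounts for the reflected Toeplitz shape of $\bV$.

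The main obstacle I anticipate is assembling the overlap block. The naive argument that splits a codeword $\bc=(\bx_1,\bx_2,\bx_3)\bG$ with $\bx_2\ne\bzero$ into its $\cA$-part and $\cB$-part gives only weight $\ge 2(r+1)$, which is strictly less than $r+t+1$ whenever $t>r+1$; the ambient Reed--Solomon embedding is precisely what upgrades the bound. Executing the embedding, however, requires producing $f_1,\ldots,f_t\in\E$ that are simultaneously (i) linearly independent modulo the left- and right-supported subcodes generated by the $\bU$- and $\bV$-rows, (ii) lift $\bU$ to a full $[n_1,k_1]$ MDS code on $N_1$, (iii) do the symmetric lift on $N_2$, and (iv) match the exact triangular templates required for $\bA$ and $\bB$. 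I expect this to reduce to a dimension count in the quotient of $\E$ by the sum of the two locally supported subcodes, with the MDS property of the completed local codes verified via standard Vandermonde nonvanishing.
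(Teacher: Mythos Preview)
Your proposal conflates the two vector encodings of a polynomial: its coefficient vector and its evaluation vector. You define $\E$ as the evaluation code $\{(p(\alpha^0),\ldots,p(\alpha^{n-1})):\deg p<n-r-t\}$, so membership in $\E$ is a statement about evaluation vectors. You then note that $x^{i-1}u(x)$ ``lies in $\E$'' (meaning its evaluation vector does) and that ``its coefficient vector is precisely the $i$th row of the $\bU$ block.'' Both observations are correct, but they concern two \emph{different} vectors. The rows of $\bG$---hence the codewords of $\cC$---are the coefficient vectors, because that is what produces the displayed Toeplitz shapes of $\bU,\bA,\bB,\bV$. The claimed inclusion $\cC\subseteq\E$ that you use for (c) therefore does not follow. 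If instead you declare the rows of $\bG$ to be the evaluation vectors, then a $\bU$-row vanishes only at the $r+t$ chosen roots of $u$ inside $N_2$, not at all $n_2$ positions, so (P1) fails and the block form \eqref{mat:G} is destroyed. The same mix-up undermines your argument for (a): the roots of $u$ lying among the $N_2$ evaluation points says nothing about the coefficient vectors restricted to $N_1$ forming a GRS code.

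The paper resolves this by staying entirely in the coefficient picture and embedding into a \emph{cyclic} code of length $q-1$: it pads codewords with zeros and shows every row-polynomial of $\bG$ is divisible by $g_1(x)=(x-1)(x-\omega)\cdots(x-\omega^{r+t-1})$, so $\cC$ sits inside the cyclic code with generator $g_1$ and the BCH bound gives distance $\ge r+t+1$. The local MDS property is obtained the same way, using the shorter generator $g_2(x)=(x-1)\cdots(x-\omega^{r-1})$ for $\cA$ and $\cB$. The step you postpone to ``a dimension count in the quotient'' is in fact the crux: one must produce, for each $1\le\ell\le t$, polynomials $a^{(\ell)},b^{(\ell)}$ of degrees at most $r+t-\ell$ and $r+\ell-1$, both divisible by $g_2$, with nonzero constant terms, such that the combined row $x^{k_1-t+\ell-1}a^{(\ell)}+x^{n_1+k_2-\ell}b^{(\ell)}$ is divisible by the larger $g_1$. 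After dividing out $g_2$, this becomes a linear system in $t+1$ unknowns and $t$ equations whose coefficient matrix is a $t\times(t+1)$ Vandermonde-type matrix; solvability with all entries nonzero uses that this matrix generates a $[t+1,t,2]$ MDS code, so its dual is $[t+1,1,t+1]$. This is not a mere dimension count.
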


To derive the distance properties, our strategy is to show that 
the linear codes $\cA$, $\cB$, and $\cC$ are \emph{subcodes} of certain cyclic codes~\cite[Chap. 7]{MW_S}.
To this end, suppose that $q-1 \geq n_1+n_2$ and identify a vector
 $(c_0,c_1,\ldots,c_{q-2})\in \bbF_q^{q-1}$ with the polynomial $\sum_{j=0}^{q-2}c_j x^j$ 
in the quotient ring $\bbF_q[x]/\langle x^{q-1}-1\rangle$.
Note, however, that while the cyclic codes that we consider are of length $q-1$ over $\fq$, our codes $\cA$, $\cB$, and $\cC$ may have shorter lengths, namely $n_1$, $n_2$, and $n$, respectively. 
However, we can regard $\cA$, $\cB$, and $\cC$ as codes of length $q-1$ by simply
adding a sufficient number of zero coordinates to the right of each codeword.
Doing so obviously does not affect the polynomial representation of each codeword.  
Thus, from now on we can treat $\cA$, $\cB$, and $\cC$ as subspaces of $\fq^{q-1}$. 
 
Under the mapping of vectors to polynomials, let $(u_0,u_1,\ldots, u_{r+t})$ and $(v_0,v_1,\ldots, v_{r+t})$ correspond to $u(x)$ and $v(x)$, which are polynomials of degree at most $r+t$, respectively.
For $1\leq \ell\leq t$, let $\left(a_0^{(\ell)}, a_1^{(\ell)}, \ldots, a_{r+t-\ell}^{(\ell)}\right)$ and
$\left(b_0^{(\ell)}, b_1^{(\ell)}, \ldots, b_{r+\ell-1}^{(\ell)}\right)$ correspond to
$a^{(\ell)}(x)$ and $b^{(\ell)}(x)$, which are polynomials of degrees at most $r+t-\ell$ and $r+\ell-1$, respectively.

Furthermore, the codewords in the linear codes $\cC$, $\cA$, and $\cB$,
in their polynomial representations, can be obtained via linear combinations of 
certain polynomials described by $u(x)$, $v(x)$, $a^{(\ell)}(x)$, and $b^{(\ell)}(x)$.
Specifically,
\begin{enumerate}
\item $\cC$ is the vector subspace of $\fq^{q-1}$ spanned by the polynomials
$x^iu(x)$ for $0\leq i\leq k_1-t-1$,
$x^jv(x)$ for $n_1\leq j\leq n_1+k_2-t-1$, and
$c^{(\ell)}(x) \define x^{k_1-t+\ell-1}a^{(\ell)}(x)+x^{n_1+k_2-\ell}b^{(\ell)}(x)$ for $1\leq \ell\leq t$,

\item $\cA$ is the vector subspace of $\fq^{q-1}$ spanned by the polynomials
$x^iu(x)$ for $0\leq i\leq k_1-t-1$ and
$x^{k_1-t+\ell-1}a^{(\ell)}(x)$ for $1\leq \ell\leq t$,

\item $\cB$ is the vector subspace of $\fq^{q-1}$ spanned by the polynomials
$x^jv(x)$ for $n_1\leq j\leq n_1+k_2-t-1$ and
$x^{n_1+k_2-\ell}b^{(\ell)}(x)$ for $1\leq \ell\leq t$.
\end{enumerate}

Next, we provide sufficient conditions for 
the polynomials $u(x)$, $v(x)$, $a^{(\ell)}(x)$, and $b^{(\ell)}(x)$
so that the codes $\cC$, $\cA$, and $\cB$ have the desired dimension and distance properties.

\begin{proposition}\label{prop:poly}
Let $\omega$ be a primitive element in $\bbF_q$. 
Suppose that the polynomials $u(x)$, $v(x)$, $a^{(\ell)}(x)$, and $b^{(\ell)}(x)$, $1\leq \ell\leq t$,
satisfy the following conditions.
 \begin{enumerate}[(D1)]
 \item $u_0$, $v_0$, $a^{(\ell)}_0$ and $b^{(\ell)}_0$ are nonzero for $1\leq \ell\leq t$.
 \item $u(\omega^j)=v(\omega^j)=0$ for $0\leq j\leq r+t-1$.
 \item For $1\leq \ell\leq t$, $a^{(\ell)}(\omega^j)=b^{(\ell)}(\omega^j)=0$ for $0\leq j\leq r-1$.
 \item For $1\leq \ell\leq t$, $c^{(\ell)}(\omega^j)=0$ for $0\leq j\leq r+t-1$, 
 where
 $c^{(\ell)}(x)=x^{k_1-t+\ell-1}a^{(\ell)}(x)+x^{n_1+k_2-\ell}b^{(\ell)}(x)$.
 \end{enumerate}
Then the LEDC $\cC$ defined as above is an optimal $[n_1,k_1;n_2,k_2;t]$-LEDC of minimum distance $r+t+1$.
\end{proposition}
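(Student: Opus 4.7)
\emph{Strategy.} The plan is to identify each of $\cA$, $\cB$, and $\cC$ with a subspace of the cyclic code ring $\fq[x]/\langle x^{q-1}-1\rangle$ (padding the shorter codewords with zeros on the right, which does not change Hamming weights), and then to invoke the BCH bound. The polynomial generators listed in items (1)--(3) correspond exactly to the rows of the displayed block matrix $\bG$: the $i$th row of $\bU$ encodes the coefficients of $x^i u(x)$; the $\ell$th row of the middle block $(\bA \mid \bB)$ encodes $c^{(\ell)}(x)=x^{k_1-t+\ell-1}a^{(\ell)}(x)+x^{n_1+k_2-\ell}b^{(\ell)}(x)$; and the rows of $\bV$ are the shifts $x^j v(x)$ for $n_1\le j\le n_1+k_2-t-1$. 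The degree bookkeeping is consistent because $n_1=k_1+r$ and $n_2=k_2+r$.

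\emph{Dimensions.} Next I would verify that $\cA$, $\cB$, and $\cC$ have the expected dimensions $k_1$, $k_2$, and $k=k_1+k_2-t$. Condition (D1) ensures $u_0\ne 0$, so the $k_1-t$ cyclic shifts of $u$ in $\bU$ form a Toeplitz-like staircase whose leading entries sit at columns $0,1,\ldots,k_1-t-1$; appending the $t$ middle rows, whose leading coefficients $a_0^{(\ell)}$ sit at the next $t$ distinct columns $k_1-t,\ldots,k_1-1$, produces a $k_1\times n_1$ matrix $\bG_\cA$ whose leading $k_1\times k_1$ block is upper triangular with nonzero diagonal, hence of rank $k_1$. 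A symmetric staircase argument (using $v_0$ and $b_0^{(\ell)}$ nonzero) shows that $\bG_\cB$ has rank $k_2$. Any linear dependence among the $k$ rows of $\bG$, when projected onto the first $n_1$ coordinates, kills the $\bV$-rows and must then be trivial by the established independence of $\bG_\cA$; the remaining relation among the $\bV$-rows is then trivial as well, so $\dim\cC=k$.

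\emph{Distances via BCH.} Conditions (D2) and (D3) together ensure that every polynomial generator of $\cA$ vanishes at $\omega^0,\ldots,\omega^{r-1}$, so every codeword of $\cA$ (viewed modulo $x^{q-1}-1$) does; hence $\cA$ lies in a length-$(q-1)$ cyclic code with $r$ consecutive zeros and has minimum distance at least $r+1$ by the BCH bound. Combined with the Singleton bound $d\le n_1-k_1+1=r+1$, this forces $\cA$ to be MDS, and symmetrically for $\cB$, establishing the LEDC properties (P1)--(P2). For the global distance, conditions (D2) and (D4) guarantee that every generator of $\cC$ has the $r+t$ consecutive roots $\omega^0,\ldots,\omega^{r+t-1}$, so BCH gives $\dist(\cC)\ge r+t+1$; combined with the upper bound \eqref{eq:d2_reduced}, equality holds and $\cC$ is optimal.

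\emph{Main obstacle.} The subtle point is that (D3) is strictly weaker than (D2): the polynomials $a^{(\ell)}$ and $b^{(\ell)}$ are only guaranteed to vanish at $\omega^0,\ldots,\omega^{r-1}$, not at $\omega^r,\ldots,\omega^{r+t-1}$. Individually, a shifted row from $\bA$ therefore need not lie in a cyclic code with $r+t$ consecutive zeros; it is only the \emph{combined} polynomial $c^{(\ell)}(x)$ of the middle block that acquires the extra $t$ roots, through the coupling condition (D4). Getting this alignment right --- choosing the shift exponents $k_1-t+\ell-1$ and $n_1+k_2-\ell$ so that $c^{(\ell)}$ can be made to vanish on an interval of length $r+t$ while respecting the staircase degree constraints imposed by (D1) --- is the real algebraic content, and is what lifts the local BCH distance $r+1$ up to the global BCH distance $r+t+1$.
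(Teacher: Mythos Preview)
Your proposal is correct and follows essentially the same route as the paper: use (D1) to establish the ranks of $\bG_\cA$, $\bG_\cB$, and $\bG$, then use (D2)--(D4) to embed $\cA$, $\cB$ in a cyclic code with $r$ consecutive zeros and $\cC$ in one with $r+t$ consecutive zeros, and finish with the BCH bound. Your rank argument is somewhat more explicit than the paper's (which simply asserts full rank from (D1)), and your ``main obstacle'' paragraph is commentary rather than an additional proof step, but the logical skeleton is identical.
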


To prove this proposition, we employ the well-known BCH bound on minimum distance of a cyclic code. 
We first recall some necessary notations from ~\cite[Ch. 7]{MW_S}.
Let $\C$ be a cyclic code of length $q-1$ over $\fq$. 
An element $z \in \fq$ is called a \emph{zero} of $\C$
if $c(z) = 0$ for every codeword $c(x) \in \C$. 
Let $Z$ be the set of all zeros of $\C$. 
The polynomial $g(x) \triangleq \prod_{\alpha \in Z} (x-\alpha)$ is called the \emph{generator polynomial} of $\C$. 
Then $c(x) \in \C$ if and only if $g(x) | c(x)$. 

\begin{theorem}[BCH bound] Let $\omega$ be the primitive element of $\bbF_q$ and $r$ be an integer.
The cyclic code with the generator polynomial $(x-1)(x-\omega)\cdots(x-\omega^{r-1})$
has minimum distance at least $r+1$.
\end{theorem}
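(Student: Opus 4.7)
The plan is to run the classical Vandermonde argument on the parity check conditions imposed by the zeros of the generator polynomial. Let $\cC$ denote the cyclic code in question. By the definition of $g(x)=(x-1)(x-\omega)\cdots(x-\omega^{r-1})$ as the generator polynomial, a polynomial $c(x)\in \fq[x]/\langle x^{q-1}-1\rangle$ lies in $\cC$ if and only if $g(x)\mid c(x)$, which is equivalent to $c(\omega^j)=0$ for every $0\le j\le r-1$. Since minimum distance equals minimum weight for a linear code, it suffices to prove that every nonzero codeword has Hamming weight at least $r+1$.

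The main step is a proof by contradiction. Suppose $c(x)=\sum_{s=0}^{q-2}c_s x^s$ is a nonzero codeword with $\weight(c)=w\le r$, and let $i_1<i_2<\cdots<i_w$ denote the positions of its nonzero coefficients. Writing out the conditions $c(\omega^j)=0$ for $j=0,1,\ldots,w-1$ (which is allowed because $w\le r$) gives the homogeneous linear system
\[
\sum_{\ell=1}^{w} c_{i_\ell}\,(\omega^{i_\ell})^{j}=0,\qquad j=0,1,\ldots,w-1,
\]
in the unknowns $c_{i_1},\ldots,c_{i_w}$. The coefficient matrix is $M=\bigl((\omega^{i_\ell})^{j}\bigr)_{0\le j\le w-1,\,1\le \ell\le w}$, which is a Vandermonde matrix in the elements $\omega^{i_1},\ldots,\omega^{i_w}$.

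The key observation is that $\omega$ is a primitive element of $\fq$, so it has multiplicative order exactly $q-1$; since $0\le i_1<\cdots<i_w\le q-2$, the elements $\omega^{i_1},\ldots,\omega^{i_w}$ are pairwise distinct. Therefore
\[
\det(M)=\prod_{1\le \ell<m\le w}\bigl(\omega^{i_m}-\omega^{i_\ell}\bigr)\neq 0,
\]
so the only solution to the system is $c_{i_1}=\cdots=c_{i_w}=0$, contradicting the assumption that the $c_{i_\ell}$ are the nonzero coefficients of $c(x)$. Hence no nonzero codeword can have weight $\le r$, and $\dist(\cC)\ge r+1$.

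I do not expect any genuinely hard step here; the argument is entirely standard and the only thing to be careful about is ensuring that the exponents $i_\ell$ are distinct modulo the order of $\omega$, which is automatic because we are working with coordinates in $\{0,1,\ldots,q-2\}$ and $\omega$ has order $q-1$. Everything else is linear algebra over $\fq$.
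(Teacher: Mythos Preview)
Your proof is correct; it is the standard Vandermonde argument for the BCH bound. Note, however, that the paper does not actually prove this theorem: it is quoted as a classical result from the literature (with reference to MacWilliams--Sloane) and used as a black box in the proof of Proposition~\ref{prop:poly}, so there is no paper proof to compare against.
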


\begin{proof}[Proof of Proposition \ref{prop:poly}]
First observe that from condition (D1), the matrices $\bG$ given in \eqref{mat:G}, 
$\bG_\cA = \left(\begin{array}{c} \bU \\\hline \bA \end{array}\right)$, 
and $\bG_\cB = \left(\begin{array}{c} \bB \\\hline \bV \end{array}\right)$
all have full rank. Therefore, the corresponding linear codes $\cC$, $\cA$, and $\cB$ have the desired dimensions.
For the distance properties, we make the following two claims.
Recall that we may regard $\cC$, $\cA$, and $\cB$ as codes of length $q-1$ by 
adding a sufficient number of zero coordinates to the right of each codeword. 
\begin{enumerate}[(C1)]
\item $\cC$ is a subcode of the cyclic code with generator polynomial 
$g_1(x) = (x-1)(x-\omega^{})\cdots(x-\omega^{r+t-1})$. 
\item $\cA$ and $\cB$ are subcodes of the cyclic code with generator polynomial 
$g_2(x) = (x-1)(x-\omega^{})\cdots(x-\omega^{r-1})$. 
\end{enumerate}

Then by the BCH bound, the codes $\cC$, $\cA$, and $\cB$ have minimum distance $r+t+1$, $r+1$ and $r+1$, respectively and the proposition is immediate. 

Hence, it suffices to prove the claim. 
From conditions (D2) and (D4), we deduce that
\[
u(\omega^j)=v(\omega^j)=c^{(1)}(\omega^j)=\cdots=c^{(t)}(\omega^j), 
\ 0\leq j\leq r+t-1.
\]
Therefore, $g_1(x)=(x-1)(x-\omega^{})\cdots(x-\omega^{r+t-1})$ divides all polynomials in the basis of $\cC$. Hence, (C1) follows.

It can be also verified that $g_2(x)= (x-1)(x-\omega^{})\cdots(x-\omega^{r-1})$ divides all polynomials in the bases
of $\cA$ and $\cB$, respectively. Hence, (C2) follows. 
\end{proof}
\vskip 5pt 

The following proposition shows that the polynomials satisfying conditions (D1)-(D4)
in Proposition~\ref{prop:poly} do exist. Hence, Theorem~\ref{thm:small_field} follows.

\begin{proposition}\label{prop:equal}
There exist polynomials $u(x)$, $v(x)$, $a^{(\ell)}(x)$, and $b^{(\ell)}(x)$ for $1\leq \ell \leq t$ 
that satisfy conditions (D1)-(D4).
\end{proposition}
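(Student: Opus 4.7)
The plan is to reduce (D1)--(D4) to a finite linear-algebra question, then exhibit a solution using a (generalized) Vandermonde argument. Conditions (D2) and (D3) specify many zeros of $u,v,a^{(\ell)},b^{(\ell)}$, condition (D4) couples $a^{(\ell)}$ and $b^{(\ell)}$ at additional points, and (D1) is a non-vanishing condition at $x=0$ that will be forced by the genericity of the constructed solutions.

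First I will dispose of $u(x)$ and $v(x)$. Since (D2) requires $r+t$ prescribed roots while $\deg u,\deg v\leq r+t$, set
$$u(x)=v(x)=\prod_{j=0}^{r+t-1}(x-\om^j).$$
The constant term is $\prod_{j=0}^{r+t-1}(-\om^j)\neq 0$, so (D1) holds automatically for $u_0$ and $v_0$.

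Next, for each $\ell\in\{1,\ldots,t\}$ I will handle $a^{(\ell)}$ and $b^{(\ell)}$ jointly. Enforce (D3) by writing
$$a^{(\ell)}(x)=g_2(x)P_\ell(x),\qquad b^{(\ell)}(x)=g_2(x)Q_\ell(x),$$
where $g_2(x)=\prod_{j=0}^{r-1}(x-\om^j)$, $\deg P_\ell\leq t-\ell$ and $\deg Q_\ell\leq \ell-1$. Because $g_2(0)\neq 0$, condition (D1) reduces to $P_\ell(0)\neq 0$ and $Q_\ell(0)\neq 0$. The constraints of (D4) at $\om^j$ with $j<r$ are automatic via (D3), so the remaining $t$ constraints are $h^{(\ell)}(\om^j)=0$ for $r\leq j\leq r+t-1$, where
$$h^{(\ell)}(x)=x^{k_1-t+\ell-1}P_\ell(x)+x^{n_1+k_2-\ell}Q_\ell(x).$$
Expanding in monomials yields a homogeneous linear system $M_\ell\bz=\bzero$ in the $t+1$ unknown coefficients $(p_0,\ldots,p_{t-\ell},q_0,\ldots,q_{\ell-1})$, whose column indexed by the monomial $x^e$ carries entries $\om^{je}$ for $j=r,\ldots,r+t-1$.

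Finally, I will argue that $\ker M_\ell$ contains a vector with every coordinate nonzero. The exponents contributed by $P_\ell$ are the consecutive integers $k_1-t+\ell-1,\ldots,k_1-1$, and those contributed by $Q_\ell$ are $n_1+k_2-\ell,\ldots,n_1+k_2-1$; since $n_1+k_2-\ell\geq k_1$ these two blocks are disjoint, and all $t+1$ exponents lie in $[0,n_1+n_2-1]\subseteq[0,q-2]$ by the hypothesis $n_1+n_2\leq q-1$. Hence the $t+1$ elements $\om^e$ are pairwise distinct in $\fq\setminus\{0\}$. Any $t\times t$ submatrix of $M_\ell$ obtained by deleting one column has the shape $(y_l^{j})_{j=r,\ldots,r+t-1;\;l=1,\ldots,t}$ with distinct $y_l$, i.e.\ a Vandermonde matrix up to a diagonal rescaling, hence nonsingular. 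Thus $\ker M_\ell$ is one-dimensional, and any nonzero $\bz\in\ker M_\ell$ has no zero coordinate: a zero at position $s$ would place a nontrivial element in the kernel of a nonsingular $t\times t$ submatrix. Picking such a $\bz$ forces $p_0\neq 0$ and $q_0\neq 0$, giving $a_0^{(\ell)},b_0^{(\ell)}\neq 0$ and completing (D1). The main obstacle is this final step---verifying the distinctness of the $t+1$ exponents modulo $q-1$ and leveraging the nonsingularity of every $t\times t$ minor to rule out any zero coordinate in the one-dimensional kernel direction.
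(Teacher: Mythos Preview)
Your proof is correct and follows essentially the same approach as the paper. The paper also sets $u(x)=v(x)=\prod_{j=0}^{r+t-1}(x-\omega^j)$, factors $a^{(\ell)},b^{(\ell)}$ through $g_2(x)$, and reduces (D4) to a $t\times(t+1)$ homogeneous linear system whose coefficient matrix has distinct-power columns; the only cosmetic difference is that the paper first shifts out the common factor $x^{k_1-t+\ell-1}$ (yielding exponents $0,\ldots,t-\ell,T,\ldots,T+\ell-1$) and phrases the final step as ``$\bM$ generates a $[t+1,t,2]$ MDS code, so the solution lies in its $[t+1,1,t+1]$ dual,'' which is exactly your observation that every $t\times t$ minor is a nonsingular (rescaled) Vandermonde matrix and hence the one-dimensional kernel has no zero coordinate.
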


To prove this proposition, we consider the following lemma.

\begin{lemma}
\label{lem:equal} 
Fix $1\leq \ell\leq t$, $r\geq 1$ and $t-\ell<T<q-\ell$.
Then there exists polynomials 
$a(x)$ with degree at most $t-\ell$ and
$b(x)$ with degree at most $\ell-1$,
having nonzero constants, such that
\begin{equation}\label{eq:equal}
a(\omega^j)+x^Tb(\omega^j)=0 \mbox{ for }r\leq j\leq r+t-1.
\end{equation}
\end{lemma}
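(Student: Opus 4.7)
My plan is to recast the $t$ conditions in \eqref{eq:equal} as a homogeneous linear system in the $t+1$ unknown coefficients of $a(x)$ and $b(x)$, and then exploit a Vandermonde structure to show that the one-dimensional solution space is spanned by a vector whose every coordinate is nonzero. In particular, the constant terms of both $a(x)$ and $b(x)$ will be forced to be nonzero.

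Write $a(x)=\sum_{i=0}^{t-\ell}a_i x^i$ and $b(x)=\sum_{i=0}^{\ell-1}b_i x^i$, and read $x^T b(\omega^j)$ in \eqref{eq:equal} as $\omega^{jT} b(\omega^j)$. Equation \eqref{eq:equal} at a fixed $j$ then becomes $\sum_{i=0}^{t-\ell} a_i\,\omega^{ji} + \sum_{i=0}^{\ell-1} b_i\,\omega^{j(T+i)} = 0$, and stacking these for $j=r,r+1,\ldots,r+t-1$ yields a homogeneous system $\bM\by=\bzero$, where $\by = (a_0,\ldots,a_{t-\ell},b_0,\ldots,b_{\ell-1})^\top$ and $\bM$ is a $t\times(t+1)$ matrix whose columns are indexed by the $t+1$ integers in $E = \{0,1,\ldots,t-\ell\}\cup\{T,T+1,\ldots,T+\ell-1\}$. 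The hypothesis $t-\ell<T$ disjoins the two blocks of $E$, and the hypothesis $T<q-\ell$ keeps every element of $E$ strictly below $q-1$; since $\omega$ has order $q-1$, the $t+1$ elements $\alpha_i \triangleq \omega^{e_i}$, indexed by $e_i\in E$, are therefore pairwise distinct and nonzero.

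Next I would observe that the entry of $\bM$ in row $j-r$ and column $i$ equals $\omega^{j e_i} = \alpha_i^r\cdot\alpha_i^{j-r}$, so factoring $\alpha_i^r$ out of column $i$ gives $\bM = \bV\bD$, where $\bD = \mathrm{diag}(\alpha_0^r,\ldots,\alpha_t^r)$ is invertible and $\bV = (\alpha_i^s)_{0\le s\le t-1,\,0\le i\le t}$ is a $t\times(t+1)$ Vandermonde-type matrix in the distinct nodes $\alpha_0,\ldots,\alpha_t$. Deleting any single column of $\bV$ leaves a square Vandermonde matrix with nonzero determinant, so by the standard cofactor-expansion argument the kernel of $\bV$ is one-dimensional and spanned by the vector whose $i$-th coordinate is $(-1)^i$ times the $i$-th Vandermonde minor; every coordinate of this spanning vector is nonzero. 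Multiplying by the invertible diagonal $\bD^{-1}$ preserves the nonvanishing of all coordinates, so $\ker\bM$ is spanned by a vector $\by$ with all entries nonzero. Reading off its coordinates produces polynomials $a(x)$ and $b(x)$ of the prescribed degrees with $a_0\ne 0$ and $b_0\ne 0$, as required.

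The main obstacle I foresee is purely bookkeeping: checking that the two inequalities $t-\ell<T<q-\ell$ are exactly what is needed to make the $t+1$ exponents in $E$ pairwise distinct modulo $q-1$, and hence to yield $t+1$ distinct nodes $\alpha_i\in\fq^\times$. Once that observation is in place, the decomposition $\bM=\bV\bD$ and the Cramer's-rule argument are routine linear algebra.
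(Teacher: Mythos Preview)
Your proof is correct and follows essentially the same approach as the paper: both set up the $t\times(t+1)$ homogeneous system, use the hypotheses $t-\ell<T<q-\ell$ to conclude that the $t+1$ nodes $\omega^0,\ldots,\omega^{t-\ell},\omega^T,\ldots,\omega^{T+\ell-1}$ are distinct in $\fq^\times$, and then argue that every $t\times t$ minor of the resulting Vandermonde-type matrix is nonsingular, forcing any nonzero kernel vector to have all coordinates nonzero. The only cosmetic difference is that the paper packages the last step in coding-theoretic language (the matrix $\bM$ generates a $[t+1,t,2]$ MDS code, so its dual is a $[t+1,1,t+1]$ MDS code), whereas you carry out the equivalent Cramer/cofactor argument explicitly after factoring $\bM=\bV\bD$.
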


\begin{proof}
Let $a(x)=\sum_{j=0}^{t-\ell}a_jx^j$ and $b(x)=\sum_{j=0}^{\ell-1}b_jx^j$.
Then \eqref{eq:equal} is equivalent to the linear system of equations
$\bM(a_0 , a_1 , \ldots , a_{r-\ell}, b_0 , b_1 , \ldots , b_{\ell-1})^T=\bzero$, 
where $\bM$ is the $t\times (t+1)$ matrix:

{\small
\begin{equation*}\label{linsys:equal}
\left(\begin{array}{cccc cccc}
1   & \cdots & (\omega^r)^{t-\ell} & (\omega^r)^T &  \cdots & (\omega^r)^{T+\ell-1}\\
1  & \cdots & (\omega^{r+1})^{t-\ell} & (\omega^{r+1})^T &  \cdots & (\omega^{r+1})^{T+\ell-1}\\
\vdots & \ddots & \vdots & \vdots  & \ddots & \vdots \\
1  & \cdots & (\omega^{r+t-1})^{t-\ell} & (\omega^{r+t-1})^T & \cdots & (\omega^{r+t-1})^{T+\ell-1}
\end{array}\right).
\end{equation*}
}
Then $\bM$ can be rewritten as
{\small
\begin{equation*}
\left(\begin{array}{cccc cccc}
1 & \cdots & (\omega^{t-\ell})^{r} & (\omega^T)^{r}  & \cdots & (\omega^{T+\ell-1})^{r}\\
1  & \cdots & (\omega^{t-\ell})^{r+1} & (\omega^{T})^{r+1} &  \cdots & (\omega^{T+\ell-1})^{r+1}\\
\vdots  & \ddots & \vdots & \vdots  & \ddots & \vdots \\
1  & \cdots & (\omega^{t-\ell})^{r+t-1} & (\omega^{T})^{r+t-1} &  \cdots & (\omega^{T+\ell-1})^{r+t-1}
\end{array}\right).
\end{equation*}
}
Since $t-\ell<T$ and $T+\ell-1 < q - 1$, the values $1$, \ldots, $\omega^{t-\ell}$, $\omega^{T}$, \ldots, $\omega^{T+\ell-1}$ are distinct nonzero elements in $\bbF_q$. Therefore $\bM$ is the generator matrix of a $[t+1,t,2]$ MDS code.  
Hence, $(a_0,a_1,\ldots,a_{t-\ell}, b_0,b_1,\ldots,b_{\ell-1})$ belongs to its dual code.
Since the dual code is a $[t+1,1,t+1]$ MDS code, which has minimum distance $t+1$,
we can choose $(a_0,a_1,\ldots,a_{t-\ell}, b_0,b_1,\ldots,b_{\ell-1})$ such that $a_0,a_1,\ldots,a_{t-\ell}, b_0,b_1,\ldots,b_{\ell-1}$ are nonzero.
In particular, $a_0$ and $b_0$ are nonzero. Thus, $a(x)$ and $b(x)$ are the desired polynomials.
\end{proof}

\begin{proof}[Proof of Proposition \ref{prop:equal}]
Clearly, setting $u(x)=v(x)=(x-1)(x-\omega^{})\cdots(x-\omega^{r+t-1})$ satisfies conditions (D1) and (D2).

For $1\leq \ell\leq t$, let $a_{*}^{(\ell)}(x)$ and $b_{*}^{(\ell)}(x)$ be the polynomials obtained from Lemma \ref{lem:equal}
by setting $T=(n_1+k_2-\ell)-(k_1-t+\ell-1)$. 
(It is obvious to verify that such $T$ satisfies $t-\ell < T < q - \ell$ for all $1 \leq \ell \leq t$.) 
Then we set
\begin{align*}
a^{(\ell)}(x) &=(x-1)(x-\omega)\cdots(x-\omega^{r-1})a_{*}^{(\ell)}(x),\\
b^{(\ell)}(x) &=(x-1)(x-\omega)\cdots(x-\omega^{r-1})b_{*}^{(\ell)}(x).
\end{align*}
We can then verify that conditions (D1), (D3) and (D4) are satisfied and
this completes the proof.
\end{proof}

\begin{remark}
For $1\leq \ell \leq t$, observe that the polynomials $a_*^{(\ell)}(x)$ and $b_*^{(\ell)}(x)$ are 
found by solving $t$ linear equations in $t+1$ variables in Lemma \ref{lem:equal}.
Therefore, the codes $\cC$, $\cA$ and $\cB$ can be constructed in time polynomial in $q=O(n_1+n_2)$ and $t$.
\end{remark}

\begin{example}[Example~\ref{ex:equal} continued]
\label{ex:2}
We provide the optimal LEDC in Example~\ref{ex:equal} using Construction II.
In this case, $K_1 = \{1,2,3,4\}$, $K_2 = \{2,3,4,5,6,7\}$. Hence, $k_1 = 4$, 
$k_2 = 6$, $k = 7$, and $t = 3$. 
Moreover, $n_1 = 5$, $n_2 = 7$, and hence, $r = n_1 - k_1 = n_2 - k_2 = 1$.  
Let $q = 13$ and $\omega=2$ be the primitive element of $\bbF_{13}$.
Since $r+t-1 = 3$, we set
\[
\begin{split}
u(x)=v(x)&=(x-1)(x-2)(x-2^2)(x-2^3)\\
&= 12+10x+5x^2+11x^3+x^4.
\end{split}
\]
Repeated applications of Lemma~\ref{lem:equal} yield the polynomials
{\small
\begin{align*}
a^{(1)}(x)&= 12+2x+5x^2+7x^3, &
b^{(1)}(x)&= 8+5x, \\
a^{(2)}(x)&= 12+7x+7x^2, &
b^{(2)}(x)&=  7+2x+4x^2, \\
a^{(3)}(x)&=12+x,  &
b^{(3)}(x)&=10+12x+3x^2+x^3.
\end{align*}
}
Hence, the optimal $[5,4;7,6;3]$-LEDC is given by the generator matrix 
\begin{equation*}
\left(\begin{array}{ccccc| ccccccc}
12 & 10 & 5 & 11 & 1 &
0 & 0 & 0 & 0 & 0 & 0 & 0\\ \hline
0 & 12 & 2 & 5 & 7 &
0 & 0 & 0 & 0 & 0 & 8 & 5\\
0 & 0 & 12 & 7 & 7 &
0 & 0 & 0 & 0 & 7 & 2 & 4\\
0 & 0 & 0 & 12 & 1  &
0 & 0 & 0 & 10 & 12 & 3 & 1\\ \hline
0 & 0 & 0 & 0 & 0 &
0 & 0 & 12 & 10 & 5 & 11 & 1 \\
0 & 0 & 0 & 0 & 0 &
0  & 12 & 10 & 5 & 11 & 1 & 0 \\
0 & 0 & 0 & 0 & 0 &
12 & 10 & 5 & 11 & 1 & 0 & 0
\end{array}\right).
\end{equation*}
\end{example}


\bibliographystyle{IEEEtran}
\bibliography{LEDC}


\end{document}